\theoremstyle{plain} 
\theoremstyle{plain} \newtheorem{definition}{Definition}[]
\theoremstyle{plain} 
\theoremstyle{plain} 
\theoremstyle{plain} 
\theoremstyle{plain} \newtheorem{theorem}{Theorem}[]
\theoremstyle{plain} 
\theoremstyle{plain} \newtheorem{lemma}{Lemma}[]
\theoremstyle{plain} 
\theoremstyle{plain} 
\theoremstyle{plain} \newtheorem{claim}{Claim}[]
\newcommand \defn {\mathrel{\triangleq}}
\DeclareMathOperator*{\argmax}{arg\,max}
\definecolor{mycolor1}{rgb}{0.00000,0.44700,0.74100}%
\definecolor{mycolor2}{rgb}{0.85000,0.32500,0.09800}%
\definecolor{mycolor3}{rgb}{0.92900,0.69400,0.12500}%
\definecolor{mycolor4}{rgb}{0.49400,0.18400,0.55600}%
\definecolor{mycolor5}{rgb}{0.46600,0.67400,0.18800}%
\newcommand \CA {{\mathcal{A}}}
\newcommand \CB {{\mathcal{B}}}
\newcommand \CE {{\mathcal{E}}}
\newcommand \CG {{\mathcal{G}}}
\newcommand \CL {{\mathcal{L}}}
\newcommand \CO {{\mathcal{O}}}
\newcommand \CR {{\mathcal{R}}}
\newcommand \CS {{\mathcal{S}}}
\newcommand \CT {{\mathcal{T}}}
\newcommand \CV {{\mathcal{V}}}
\newcommand \pol {\pi}
\title{Adaptive Submodular Influence Maximization with Myopic Feedback}
\author{
  Guillaume Salha\thanks{Equal Contribution} \\
  LIX, \'{E}cole Polytechnique\\
  \texttt{\small guillaume.salha@polytechnique.edu}
  \And
  Nikolaos Tziortziotis$^*$ \\
  LIX, \'{E}cole Polytechnique\\
  \texttt{ntziorzi@gmail.com} \\
  \And
  Michalis Vazirgiannis\\
  LIX, \'{E}cole Polytechnique\\
  \texttt{mvazirg@lix.polytechnique.fr}
}
\begin{document}

\maketitle

\begin{abstract}
  This paper examines the problem of adaptive influence maximization in social networks. As adaptive decision making is a time-critical task, a realistic feedback model has been considered, called myopic. In this direction, we propose the \emph{myopic adaptive greedy policy} that is guaranteed to provide a $(1 - 1/e)$-approximation of the optimal policy under a variant of the independent cascade diffusion model. This strategy maximizes an alternative utility function that has been proven to be adaptive monotone and adaptive submodular. The proposed utility function considers the cumulative number of active nodes through the time, instead of the total number of the active nodes at the end of the diffusion.  Our empirical analysis on real-world social networks reveals the benefits of the proposed myopic strategy, validating our theoretical results.
\end{abstract}

\section{Introduction}

Graphs are useful models for specifying relationships within a collection of objects.
Numerous real-life situations could be represented as nodes linked by edges, including social, biological or computer networks.
Discovering the most influential nodes in such networks has been the objective of considerable research in ML and AI communities. 
One of the most practical applications is that of \emph{product placement} or \textit{viral marketing}.
Consider a directed social network in which nodes correspond to potential customers.
If a customer owns a product then he can recommend it to his friends, according to a given diffusion model that simulates the \emph{word-of-mouth} effect.
Given a fixed budget, our objective is to select a set of customers to give a product for free, in order to \textit{maximize the spread of influence through the network}, i.e., to maximize the number of people that will finally buy this product.

Influence maximization (IM) in social networks was first studied by \citet{Domingos01}. 
\citet{kdd/KempeKT03} reformulated IM as a discrete optimization problem by introducing two diffusion models: \textit{Independent Cascade} (IC) and \textit{Linear Threshold} (LT) model.
They demonstrated that finding an optimal set of at most $k$ seed nodes, with $k$ to represent our budget, that maximizes influence in the network is NP-hard under both diffusion models.
Nevertheless, they proved that the utility function to maximize, which is the \textit{expected} number of influenced nodes, is \textit{monotone} and \textit{submodular}.
These properties in conjunction with the results of \citet{Nemhauser78a} imply that the \textit{greedy strategy} is guaranteed to be a $(1-1/e)$-approximation of the optimal set.
\citet{Feige98} highlighted that this is the best possible approximation guarantee, and considered as near-optimal \citep{Nemhauser78b, Vondrak10}.
These seminal works have inspired a large part of other research works, either to provide alternative frameworks \citep{Wang:2010,Lu:2013,AslayBBB14,kdd/HeK16,TangY16}, or to speed up the greedy algorithm via heuristics providing theoretical results \citep{Chen:2009, Goyal:2011,Borgs:2014, rossi:hal-01672970} or scalability guarantees \citep{Leskovec:2007,Jung:2012,Kim:2013}.

Most of the works on influence maximization are restricted to the \emph{non-adaptive} setting, where all seed nodes must be selected in advance.
The main drawback of this assumption is that the particular choice of seed nodes is completely driven by the diffusion model and the edge probability assignment.
Apparently, it may lead to a severe overestimation of the actual spread resulting from the chosen seed nodes \citep{Goyal:2011}.
Under this prism, we focus on the \textit{adaptive} setting of the IM problem.
Instead of selecting a number of seed nodes in advance, we select one (or more) node at a time, then we observe how its activation propagates through the network, and based on the observations made so far, we adaptively select the next seed node(s).
Actually, it constitutes a \textit{sequential decision making problem} where we should design a \textit{policy} that specifies which is the most appropriate node(s) to be selected at a given time.
It can be verified, even on small graphs, that the adaptive setting leads to higher spreads compared to the non-adaptive one, since we gradually gain more knowledge about the ground truth influence graph.

\textit{Adaptive submodularity} \citep{Golovin:2011} constitutes a natural generalization of submodularity to adaptive policies.
Similar to \cite{kdd/KempeKT03},  \citet{Golovin:2011} showed that, when the objective function under consideration is adaptive monotone and adaptive submodular, a simple adaptive greedy policy performs near-optimally.
Adaptive submodularity has been verified to be useful on several practical applications such as active learning, sensor placement, etc.
However, in the adaptive IM task, the adaptive submodularity property of the utility function holds only in the case of the unrealistic \emph{Full feedback} model.
Recently, an adaptive greedy policy has been proposed by \citet{sun18} for the adaptive multi-round IM problem where an independent diffusion is executed at each round (similar to  \emph{Full feedback}). 
\citet{Tang16} has introduced the \textit{partial-feedback model} that captures the trade-off between delay and performance. 
An $(\alpha, \beta)-$greedy policy has also been proposed that guarantees a constant approximation ratio under this model.
Nevertheless, the question of whether the adaptive submodularity property can be proved for more realistic feedback models, has not been answered yet. 

\noindent\textbf{Our contribution}
In this paper, we consider a modified version of the IC diffusion model, where an active node has several opportunities to influence its neighbors.
Moreover, we introduce a new utility function that instead of computing the number of active nodes at the end of the diffusion process, considers the cumulative number of active nodes through time.
We argue in Sec.~\ref{sec:MyopicFeedback} that these modifications are consistent with many real life applications.
The main contribution of this work is the proof that the considered utility function is adaptive monotone and adaptive submodular under the modified IC model with myopic feedback.
Therefore, the proposed \emph{myopic adaptive greedy policy} is theoretically guaranteed to reach a $(1-1/e)$-approximation ratio in terms of the expected utility of the optimal adaptive policy.
To present our theoretical analysis in a strict way, we resort to a \textit{layered graph representation}, similar to the one presented by \citet{kdd/KempeKT03}, where each one of the graph's layers illustrates the diffusion in the network at a specific time stamp.
We also prove that our two assumptions, that is i) an active node has several opportunities to influence its neighbors and ii) the active nodes cannot be deactivated through time, are necessary conditions to verify that the adaptive submodularity property of the proposed utility function is valid.
Finally, the superiority of the myopic adaptive greedy strategy over other adaptive heuristic strategies and a non-adaptive greedy strategy to the IM problem has been demonstrated on three real-life social networks.

\section{Preliminaries}
\label{sec:Preliminaries}

A social network is typically modeled as a directed graph $\mathcal{G}=(\CV,\CE)$ with each node  $v \in V$ to represent a person, and the edges $\CE \subseteq \CV \times \CV$ to reflect the relationships among them. 
To simulate the diffusion process in a social network we consider the IC model.
It is a discrete-time model where only the seed nodes are initially active.
Afterwards, each time where a node $v$ first becomes active, it has a single chance to activate/influence each of its inactive neighbors $u$, succeeding with known influence probability $p_{vu}$.
The diffusion process continues until no further activations are possible.

We consider that each edge $e \in \CE$ is associated with a particular state $o \in \CO$, with $\CO$ to be a set of possible states (whether an edge is \emph{live} or \emph{dead}). 
We denote by $\phi: \CE \rightarrow \CO$ a particular realization of the influence graph, indicating the status of edges in a particular world's state. 
It is also assumed that the realization $\Phi$ is a random variable with known probability distribution, $p(\phi) \defn \mathbb{P}[\Phi = \phi]$.

In the adaptive setting, after selecting a seed node $v \in \CV$, we get a \textit{partial observation} of the ground truth influence graph $\phi$ \citep{Golovin:2011}.
More specifically, after each step, our knowledge so far will be represented as a \textit{partial realization} $\psi \subseteq \CE \times \CO$, which is a function from a subset of $\CE$ to their states.
We use the notation $dom(\psi)$, called as domain of $\psi$, to refer to the set of nodes that are observed to be active through $\psi$.
Roughly speaking, we say that a partial realization observes an edge $e$, if some node $u \in dom(\psi)$ has revealed its status.
A partial realization $\psi$ is said to be \emph{consistent} with $\phi$, denoted by $\phi \sim \psi$, if the state of all edges observed by $\psi$ are the same in $\phi$.
Also, we say that $\psi$ is a \emph{subrealization} of $\psi'$, $\psi \subseteq \psi'$, if both of them are consistent with some $\phi$, and $dom(\psi) \subseteq dom(\psi')$.

Adaptive influence maximization constitutes a sequential decision making problem where we have to design a \textit{policy} $\pol$, determining sequentially which node(s) must be selected as seed(s) at each time step, given $\psi$.
We call as $E(\pol,\Phi) \subseteq \CV$ the seed nodes that have been selected following policy $\pol$ under realization $\phi$.
The standard IM utility function is defined as $f(\CS,\phi) \defn |\sigma(\CS,\phi))|$, with $\sigma(\CS,\phi)$ to be the  set of the influenced nodes at the end of the process under realization $\phi$, and given the seed set $\CS$.
Actually, our objective is the discovering of an optimal policy $\pol^{*}$ that maximizes the \textit{expected utility}, $f_{avg}(\pol) \defn \mathbb{E}_{\Phi}[f(E(\pol,\Phi),\Phi)]$. This can be written more concretely as: 
\begin{equation*}
\pol^* \in \argmax_{\pol} f_{avg}(\pol) \quad \text{ s.t. } ~ |E(\pol,\phi)|~\leq~k, \forall \phi.
\end{equation*}

In general, this is an NP-hard optimization problem \citep{Golovin:2011}. In the non-adaptive case, we can easily derive near-optimal policies if the utility function is monotone and submodular \citep{Nemhauser78a,kdd/KempeKT03}.
To provide generalizations of monotonicity and submodularity in such an adaptive setting,  \citet{Golovin:2011} adopt the expected marginal gain notion.
\begin{definition}
The conditional expected marginal benefit of $v \in \CV$, conditioned on partial realization $\psi$, is given as:
{\small $$ \Delta_f(v|\psi) \defn \mathbb{E}_{\Phi} \Big[ f(dom(\psi) \cup \{v\},\Phi) - f(dom(\psi),\Phi) | \Phi \sim \psi \Big].$$}
\end{definition}
This leads us to the following definitions of adaptive monotonicity and adaptive submodularity, defined w.r.t. to the distribution $p(\phi)$ over realizations.
\begin{definition}
Function $f$ is adaptive monotone {\textit iff} $\Delta_f(v|\psi) \geq 0$ for all $v \in \CV$ and $\psi$ such that $\mathbb{P}(\Phi \sim \psi) > 0$.
\end{definition}
\begin{definition}
Function $f$ is adaptive submodular {\textit iff} $\Delta_f(v|\psi) \geq \Delta(v|\psi')$, for all $v \in \CV \setminus dom(\psi')$ and $\psi \subseteq \psi'$.
\end{definition} 
Let $\pol^{\text{g}}$ be the adaptive greedy policy that given the partial realization $\psi$ selects the node $v \in \CV \setminus dom(\psi)$ with the highest expected marginal gain, $\Delta_f(v|\psi)$.
 \citet{Golovin:2011} proved that, if the utility function $f$ is adaptive monotone and adaptive submodular w.r.t. $p(\phi)$, then $\pol^{\text{g}}$ is a $(1 - 1/e)$-approximation of $\pol^*$, $f_{avg}(\pol^{\text{g}})  \geq (1 - 1/e) f_{avg}(\pol^*)$.
This constitutes a direct extension of the non-adaptive bound, which was proved to be near-optimal \citep{Nemhauser78a}.

In the adaptive IM problem, the following two concrete feedbacks can be considered:
\begin{itemize}[leftmargin=*,noitemsep,nolistsep]
\item \textit{Full-adoption feedback:} activating a seed node, we observe the {\em entire} propagation (cascade) in graph, and then we select the next seed node;
\item \textit{Myopic feedback:} activating a seed node at time $t$, we only observe the status (active or not) of the neighbors of the seed nodes at time $t+1$.
\end{itemize}
Therefore, in myopic feedback model, selecting a node at time $t$ has an impact at time $t+2$, $t+3$, and so on. 
Nevertheless, it has been shown \citep{Golovin:2011} that the standard utility function $f$ holds its adaptive submodular property \emph{only} under the \emph{full-adoption feedback} model (counterexamples are reported in \citep{Golovin:2011,VaswaniL16}).
Thus, there is no guarantee that we can discover a policy able to approximate the expected utility  of the best policy within a reasonable factor in the case of the \emph{myopic feedback model}.

\section{Myopic Feedback through Layered Graphs}
\label{sec:MyopicFeedback}
The limitations of the \textit{full-adoption feedback} (i.e., in most applications the propagation in the network is not instantaneous) motivate us to focus on the \textit{myopic feedback model} that fits better on real world.

\noindent\textbf{Utility function} To deal with this situation, we introduce an alternative utility that considers \textit{the cumulative number of active nodes over time}  instead of  the total number of active nodes at the end of the diffusion process.
More precisely, given a finite \textit{horizon} $T$, the proposed utility function is defined as:
\begin{equation*}
  \tilde{f}(\CS,\phi) \defn \sum_{t=1}^{T}|\sigma_t(\CS,\phi)|,
\end{equation*}
where $\sigma_t(\CS,\phi)$ represents the set of active nodes at time $t$ if the seed set $\CS$ has been selected under realization $\phi$.
According to $\tilde{f}$, if a node is active for three time steps, it will yield a reward equal to $3$ instead of $1$ as in the case of standard IM utility function $f$.
The proposed utility function is consistent with many real life situations.
Consider, for instance, the case of platforms with a monthly subscription, like Netflix or Amazon. Those services charge each active user every month on the date he signed up. Thus, the companies' profit increases as the users are active for longer periods. Therefore, the value of an active node is additive over time.  

\noindent\textbf{Modified IC model} Let us now introduce a slight modification of the \emph{standard} IC model, which is still consistent with most real-world applications. 
In contrast to the \emph{standard} IC model where an active node has a single chance to influence its neighbors, in the \emph{modified} IC model each active node has multiple opportunities  to influence its inactive neighbors.
In Section~\ref{sec:TheroreticalAnalysis}, we prove that the proposed utility function, $\tilde{f}$, is adaptive submodular only under the \textit{modified} IC model with myopic feedback.  

\noindent\textbf{Layered graph representation} To represent the \textit{evolution of the network over time}, we resort to a \textit{layered graph representation}, denoted as $\mathcal{G}^L$. A graph's layer corresponds to the representation of the original graph at a specific time step, with $\CL_t$ to denote the set of nodes on layer $t$.
Consider for example  the original graph illustrated at Fig.~\ref{fig:graphs}(a) and its evolution over three successive time steps.
We retrieve the same amount of information as in the case of the layered graph, Fig.~\ref{fig:graphs}(b).
Indeed, node $v$ is active at time $t$ if and only if $v_t$ is active in the layered graph.
Then, it influences its neighbor $u$ at time $t+1$ with probability $p_{vu}$.
Thus, there is a possibly \emph{live} edge from $v_t$ to $u_{t+1}$.
For the sake of simplicity, in the rest of the paper we use the next indexing $f_{\mathcal{G}}$ or $\tilde{f}_{\mathcal{G}}$ in order to explicitly declare that function  $f$ or $\tilde{f}$ is computed on graph  $\mathcal{G}$.

It can be easily verified that the two networks, the original and the layered one, are closely linked.
The following lemma highlights the fact that computing $\tilde{f}_{\mathcal{G}}$ is equivalent to computing $f$ on the layered graph, i.e.  $f_{\mathcal{G}^L}$.
\begin{figure}[h]
  \begin{center}
  \subfigure[]{
    \begin{tikzpicture}[->, >=stealth', auto, semithick, node distance=1.5cm]
\tikzstyle{every state}=[fill=white,draw=black,thick,text=black,scale=.55]
\node[state, fill=gray!30] (0) {$v$};
\node[state]    (1)[right of=0, xshift=.3cm]   {$u$};
\node[state]    (2)[below of=0]   {$w$};
\coordinate (Middle) at ($(0)!0.5!(1)$);	
\node (t1)[below of = 0, xshift=.6cm, yshift=0.3cm] {$t=1$};

\node[state, fill=gray!30]    (10)[right of=1] {$v$};
\node[state, fill=gray!30]    (11)[right of=10, xshift=.3cm]   {$u$};
\node[state]    (12)[below of=10]   {$w$};
\node (t2)[below of = 10, xshift=.6cm, yshift=0.3cm] {$t=2$};

\node[state, fill=gray!30]    (20)[below of=2, yshift=-0.3cm] {$v$};
\node[state, fill=gray!30]    (21)[right of=20, xshift=.3cm] {$u$};
\node[state, fill=gray!30]    (22)[below of=20]   {$w$};
\node (t3)[below of = 20, xshift=.6cm, yshift=0.3cm] {$t=3$};

\draw[->] (0) -- node{$p_{vu}$}  (1);
\draw[->] (1) -- node{$p_{uv}$} (2);

\draw[->] (10) --  node{$p_{vu}$} (11);
\draw[->] (11) --  node{$p_{uv}$} (12);

\draw[->] (20) --  node{$p_{vu}$} (21);
\draw[->] (21) --  node{$p_{uw}$} (22);

\end{tikzpicture}
  } 
  \subfigure[]{
    \begin{tikzpicture}[->, >=stealth', auto, semithick, node distance=2cm]
\tikzstyle{every state}=[fill=white,draw=black,thick,text=black,scale=0.73]
\node[state, fill=gray!30]    (0)   {$v_1$};
\node[state]    (1)[below of=0]   {$u_1$};
\node[state]    (2)[below of=1]   {$w_1$};
\node[state, fill=gray!30]    (3)[right of=0]   {$v_2$};
\node[state, fill=gray!30]    (4)[below of=3]   {$u_2$};
\node[state]    (5)[below of=4]   {$w_2$};
\node[state, fill=gray!30]    (6)[right of=3]   {$v_3$};
\node[state, fill=gray!30]    (7)[below of=6]   {$u_3$};
\node[state, fill=gray!30]    (8)[below of=7]   {$w_3$};

\path
(0) edge[left,right]    node{$p_{vu}$}      (4)
(3) edge[left,right]    node{$p_{vu}$}      (7)
(1) edge[left,right]    node{$p_{uw}$}      (5)
(4) edge[left,right]    node{$p_{uw}$}      (8)
(0) edge[left,above, dashed]    node{$1$}      (3)
(1) edge[left,above, dashed]    node{$1$}      (4)
(2) edge[left,above, dashed]    node{$1$}      (5)
(3) edge[left,above, dashed]    node{$1$}      (6)
(4) edge[left,above, dashed]    node{$1$}      (7)
(5) edge[left,above, dashed]    node{$1$}      (8);
\end{tikzpicture}
  }
  \vspace{-1.3em}
  \caption{Influence propagation representation over (a) Original graph $\mathcal{G}$ and (b) Layered graph $\mathcal{G}^L$. The shaded nodes illustrate the active nodes in the graph. $p_{vu}$ represents the propagation probability between $v$ and $u$. In both cases, the nodes can only switch from being inactive to being active.}
  \label{fig:graphs}
\end{center}
\vspace{-.5em}
\end{figure}
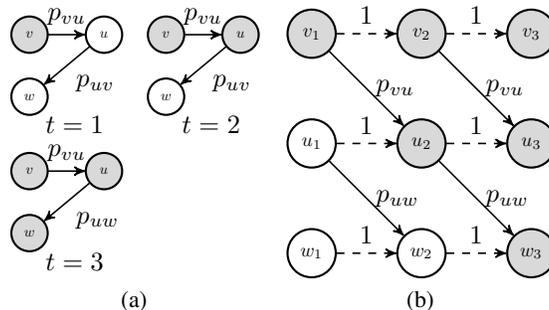 
\begin{lemma}
  \label{lem:1}
  For seed set $\CS$ (with time indices) and realization $\phi$, it holds that $\tilde{f}_{\mathcal{G}}(\CS,\phi) = f_{\mathcal{G}^L}(\CS,\phi)$.
\end{lemma}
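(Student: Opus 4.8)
The plan is to exhibit a bijection between the nodes that are active at the end of the diffusion on the layered graph $\mathcal{G}^L$ and the (node, time-stamp) pairs that are active during the diffusion on $\mathcal{G}$, and then simply to count. The starting point is the observation that a realization $\phi$ for the \emph{modified} IC model on $\mathcal{G}$ has to record, for every edge $(v,u)\in\CE$ and every time step $t$, whether the influence attempt of $v$ on $u$ at time $t$ succeeds; this is exactly the data of a realization of the \emph{standard} IC model on $\mathcal{G}^L$, under the identification that the propagation edge $v_t \to u_{t+1}$ is live in $\mathcal{G}^L$ if and only if that attempt succeeds, while the ``survival'' edges $v_t \to v_{t+1}$ (which carry probability $1$) are always live. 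Under the same identification, seeding the time-indexed node $v_t$ in $\mathcal{G}$ corresponds to declaring $v_t$ a seed of $\mathcal{G}^L$, so the seed set $\CS$ is common to both pictures.

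Next I would prove, by induction on $t \in \{1,\dots,T\}$, that the set of nodes of the layer $\CL_t$ that are active at the end of the standard IC diffusion on $\mathcal{G}^L$ is exactly $\{\, v_t : v \in \sigma_t(\CS,\phi)\,\}$. The base case $t=1$ is immediate, since in either graph the only active nodes at time $1$ are the seeds introduced at time $1$. For the inductive step, $u_{t+1}$ is active in $\mathcal{G}^L$ iff it is a seed of layer $t+1$, or it has an incoming live edge whose tail lies in the active part of $\CL_t$; by the structure of $\mathcal{G}^L$ this tail is either $u_t$ through the survival edge (so $u \in \sigma_t(\CS,\phi)$ by the inductive hypothesis) or some $v_t$ through a live propagation edge (so $v \in \sigma_t(\CS,\phi)$ and $v$'s attempt on $u$ at time $t$ succeeds). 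These are precisely the three ways in which $u$ enters $\sigma_{t+1}(\CS,\phi)$ under the modified IC model on $\mathcal{G}$: being seeded at time $t+1$, remaining active from time $t$ (active nodes are never deactivated), or being freshly influenced by an active neighbour at time $t$. Since every edge of $\mathcal{G}^L$ runs from some $\CL_t$ to $\CL_{t+1}$, the active part of $\CL_{t+1}$ is determined by the active part of $\CL_t$ together with the seeds, which is exactly what lets the induction close cleanly.

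To finish, I would use that $\CL_1,\dots,\CL_T$ partition the vertex set of $\mathcal{G}^L$, so
\[
f_{\mathcal{G}^L}(\CS,\phi)=\sum_{t=1}^{T}\bigl|\{\text{active nodes of }\CL_t\}\bigr|=\sum_{t=1}^{T}|\sigma_t(\CS,\phi)|=\tilde{f}_{\mathcal{G}}(\CS,\phi).
\]
The step I expect to require the most care is the realization correspondence in the first paragraph: one must check that the ``multiple opportunities'' feature of the modified IC model on $\mathcal{G}$ is faithfully captured by the fact that each successive layer transition in $\mathcal{G}^L$ supplies a \emph{fresh} edge — hence an independent success/failure — for every ordered pair $(v,u)\in\CE$, and that the probability-$1$ survival edges correctly encode the no-deactivation assumption; everything after that is bookkeeping on the layered DAG. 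Note finally that myopic versus full-adoption feedback plays no role here, since Lemma~\ref{lem:1} concerns the deterministic utility values attached to a single fixed realization $\phi$.
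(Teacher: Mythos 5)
Your argument is correct and follows essentially the same route as the paper's proof: identify the active nodes of layer $\CL_t$ in $\mathcal{G}^L$ with the nodes active at time $t$ in $\mathcal{G}$, then sum over layers. The only difference is that you make the layer-by-layer correspondence explicit (via the realization identification and the induction on $t$), whereas the paper simply asserts it as an observation.
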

\begin{proof}
  It suffices to remark that the number of active nodes on layer $\CL_t$ is equal to the number of active nodes on $\mathcal{G}$ at time $t$.
  Summing up the active nodes of each layer $\CL_i$ is the same by applying $f$ on $\mathcal{G}^L$, which is equivalent to summing up the number of active nodes on $\mathcal{G}$ at each time-step.
\end{proof}
In our model, the time dependency is even stronger compared to previous models.
\emph{Partial realizations} $\psi$ should now indicate the status of observed nodes and edges as well as the corresponding timesteps, as nodes can be active over multiple timesteps and edges can be crossed multiple times.
Actually, we need to know up to which time step the $\psi$ contains observations.
This leads to the next definition.
\begin{definition}
Let $\Psi$ be the set of all possible partial realizations. Time function $\mathcal{T} : \Psi \rightarrow \{1,\dots,T\}$ returns, for a particular $\psi$, the largest time
index from observed nodes and edges, and $1$ if $\psi = \emptyset$.
\end{definition}
In a nutshell, choosing $v$ as a seed node having observed $\psi$ with $\mathcal{T}(\psi) = t \leq T$, is the same as choosing $v_t$ as a seed node in the layered graph, since the process
is now at time $t$. In this point, let us provide a last definition.
\begin{definition}
  \label{def:5}
  The marginal gain of choosing $v$ as a seed node, having observed $\psi$ with $\mathcal{T}(\psi) = t$, and for the ground truth realization $\phi$ of the network, is defined as:
  \begin{equation*}
    \delta_{\phi}(v|\psi) \defn \tilde{f}_{\mathcal{G}}(dom(\psi)\cup \{v_t\},\phi) - \tilde{f}_{\mathcal{G}}(dom(\psi),\phi).
  \end{equation*}
\end{definition}
The aforementioned definition is useful for the analysis of the next three lemmas. Lemma~\ref{lem:2} is a \textit{markovian result on layers}. It shows
that, to evaluate $\delta_{\phi}(v|\psi)$, we only need information from the current layer, $\CL_{\mathcal{T}(\psi)}$. Information
from previous layers, $\CL_1, \dots, \CL_{\mathcal{T}(\psi) - 1}$, have no impact on the marginal gain of adding $v$ to seed nodes at time $\mathcal{T}(\psi)$. 
On the other hand, Lemmas~\ref{lem:3} and \ref{lem:4} are inequalities over  $\delta_{\phi}(\cdot|\psi)$, that will be central in the proofs of Section~\ref{sec:TheroreticalAnalysis}.
\begin{lemma}
  \label{lem:2}
  The marginal gain of choosing $v$ as a seed node on $\mathcal{G}^L$, under  partial realization $\psi$ with $\mathcal{T}(\psi) = t$, is given by:
  $\delta_{\phi}(v|\psi) = f_{\mathcal{G}^L}(\left[ \CL_t \cap dom(\psi) \right]\cup \{v_t\},\phi) - f_{\mathcal{G}^L}( \CL_t \cap dom(\psi),\phi).$
\end{lemma}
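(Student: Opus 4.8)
The plan is to transport everything to the layered graph $\mathcal{G}^L$, where influence moves forward by exactly one layer at each step, and then read the claim off that structure. By Lemma~\ref{lem:1} and Definition~\ref{def:5}, with $t=\mathcal{T}(\psi)$ we have $\delta_\phi(v|\psi)=f_{\mathcal{G}^L}(dom(\psi)\cup\{v_t\},\phi)-f_{\mathcal{G}^L}(dom(\psi),\phi)$. I will use that $f_{\mathcal{G}^L}(\CS,\phi)=|\sigma_{\mathcal{G}^L}(\CS,\phi)|$, where $\sigma_{\mathcal{G}^L}(\CS,\phi)$ is the set of nodes of $\mathcal{G}^L$ reachable from $\CS$ along edges that are live under $\phi$, the deterministic ``vertical'' edges $x_i\to x_{i+1}$ always counting as live. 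The structural fact doing all the work is that every edge of $\mathcal{G}^L$ goes from some layer $i$ to layer $i+1$; writing $\CL_{\geq t}\defn\CL_t\cup\dots\cup\CL_T$ and $\CL_{<t}\defn\CL_1\cup\dots\cup\CL_{t-1}$, this forces $\sigma_{\mathcal{G}^L}(\{v_t\},\phi)\subseteq\CL_{\geq t}$ and, more generally, makes any live path starting at layer $i$ visit each of the layers $i,i+1,\dots$ up to its endpoint exactly once.

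First I would split $dom(\psi)=D_{<t}\sqcup D_t$ with $D_t\defn\CL_t\cap dom(\psi)$ and $D_{<t}\defn dom(\psi)\setminus D_t$; since $\mathcal{T}(\psi)=t$, $dom(\psi)$ contains no node above layer $t$, so this is a genuine partition into ``layers below $t$'' and ``layer $t$''. The core step will be the identity
\[
\sigma_{\mathcal{G}^L}(dom(\psi),\phi)\cap\CL_{\geq t}=\sigma_{\mathcal{G}^L}(D_t,\phi).
\]
To prove it I would first observe that seeding all of $dom(\psi)$ creates no fresh active node in layers $\leq t$: under myopic feedback a partial realization with time index $t$ records the entire state of the process through layer $t$, so $dom(\psi)$ is already closed under propagation up to time $t$, whence $\sigma_{\mathcal{G}^L}(dom(\psi),\phi)\cap\CL_t=D_t$. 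Then, because every node of $\sigma_{\mathcal{G}^L}(dom(\psi),\phi)$ in layers $\geq t$ is reached by a live path that must cross layer $t$, and that crossing node lies in $\sigma_{\mathcal{G}^L}(dom(\psi),\phi)\cap\CL_t=D_t$, every such node is reachable from $D_t$; the reverse inclusion is immediate since $D_t\subseteq dom(\psi)$ and $D_t\subseteq\CL_t$. Here the ``no deactivation'' assumption, realized by the probability-$1$ vertical edges $x_i\to x_{i+1}$, is exactly what makes ``active at time $t$'' a monotone, permanent status that a partial realization can record faithfully.

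Finally I would count. Since $\sigma_{\mathcal{G}^L}(\{v_t\},\phi)\subseteq\CL_{\geq t}$, adding $v_t$ to the seed set leaves the reachable set in layers $<t$ untouched, while in layers $\geq t$ the core step gives $\sigma_{\mathcal{G}^L}(dom(\psi)\cup\{v_t\},\phi)\cap\CL_{\geq t}=\sigma_{\mathcal{G}^L}(D_t\cup\{v_t\},\phi)$ and $\sigma_{\mathcal{G}^L}(dom(\psi),\phi)\cap\CL_{\geq t}=\sigma_{\mathcal{G}^L}(D_t,\phi)$. Subtracting cardinalities, the common layers-$<t$ contribution cancels and we obtain $\delta_\phi(v|\psi)=f_{\mathcal{G}^L}(D_t\cup\{v_t\},\phi)-f_{\mathcal{G}^L}(D_t,\phi)$, i.e.\ the asserted equality with $D_t=\CL_t\cap dom(\psi)$. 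I expect the one delicate point to be the first half of the core step --- that a layer-$t$ node lying on a path out of an earlier observed node is itself recorded in $dom(\psi)$ --- since this is where the myopic-feedback bookkeeping ($\psi$ knows the whole state up to time $\mathcal{T}(\psi)$) and the monotonicity of activation are indispensable; for an arbitrary $\psi$ not generated by the process the two sides can already differ, so that is exactly the place to argue carefully.
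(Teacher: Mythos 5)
Your proof is correct and follows essentially the same route as the paper: rewrite $\delta_{\phi}(v|\psi)$ on the layered graph, cancel the contributions of layers before $t$ using the feedforward structure, and identify the remaining layers-$\geq t$ contribution with the spread seeded from $\left[\CL_t \cap dom(\psi)\right]$ (plus $v_t$). The only difference is that you spell out the step the paper leaves implicit, namely $\sigma(dom(\psi),\phi)\cap(\CL_t\cup\dots\cup\CL_T)=\sigma(\CL_t\cap dom(\psi),\phi)$, correctly noting that this hinges on $\psi$ being generated by the myopic-feedback process, so that $dom(\psi)$ is propagation-closed up to time $\mathcal{T}(\psi)$.
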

\begin{proof}
  Based on Def.~\ref{def:5} and Lem.~\ref{lem:1}, it holds that:
  $$\delta_{\phi}(v|\psi) = f_{\mathcal{G}^L}(dom(\psi)\cup \{v_t\},\phi) - f_{\mathcal{G}^L}(dom(\psi),\phi).$$
  Given a set $\CS$ of seed nodes on the $\mathcal{G}^L$, the utility function $f$ is given by $$f_{\mathcal{G}^L}(\CS,\phi) = \sum_{t'=1}^T |\sigma(\CS,\phi) \cap \CL_{t'} |.$$
  Then we get that:
\begin{align*}
  \delta_{\phi}(v|\psi) &= \sum_{t'=1}^T |\sigma(dom(\psi)\cup \{v_t\},\phi) \cap \CL_{t'} | - \sum_{t'=1}^T |\sigma(dom(\psi),\phi) \cap \CL_{t'} | \\
                        &= \sum_{t'=\textcolor{red}{t}}^T |\sigma(dom(\psi)\cup \{v_t\},\phi) \cap \CL_{t'} | - \sum_{t'=\textcolor{red}{t}}^T |\sigma(dom(\psi),\phi) \cap \CL_{t'} | \\
                        &= f_{\mathcal{G}^L}([ \CL_t \cap dom(\psi)]\cup \{v_t\},\phi) - f_{\mathcal{G}^L}( \CL_t \cap dom(\psi),\phi).
\end{align*}
The second equality holds due to the fact that the network $\mathcal{G}^L$ is feedforward, which means that the node $v_t$ can only influence nodes on the subsequent layers: $\CL_{t+1}, \dots,  \CL_{T}$.
\end{proof}
\begin{lemma} \label{lem:3}
  For partial realizations $\psi \subseteq \psi'$ with $\mathcal{T}(\psi) = \mathcal{T}(\psi') = t$ and any $v \in V$, we get $\delta_{\phi}(v|\psi) \geq \delta_{\phi}(v|\psi').$
\end{lemma}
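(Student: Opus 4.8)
The plan is to localize the whole statement to the single layer $\CL_t$ by means of Lemma~\ref{lem:2}, and then to read off the inequality from the submodularity of the deterministic ``coverage'' function $\CS \mapsto f_{\mathcal{G}^L}(\CS,\phi)$ for the \emph{fixed} realization $\phi$.

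First I would apply Lemma~\ref{lem:2} to both $\psi$ and $\psi'$; this is legitimate because $\mathcal{T}(\psi) = \mathcal{T}(\psi') = t$. Writing $A \defn \CL_t \cap dom(\psi)$ and $B \defn \CL_t \cap dom(\psi')$, Lemma~\ref{lem:2} gives $\delta_\phi(v|\psi) = f_{\mathcal{G}^L}(A \cup \{v_t\},\phi) - f_{\mathcal{G}^L}(A,\phi)$ and $\delta_\phi(v|\psi') = f_{\mathcal{G}^L}(B \cup \{v_t\},\phi) - f_{\mathcal{G}^L}(B,\phi)$. Since $\psi \subseteq \psi'$ we have $dom(\psi) \subseteq dom(\psi')$, and intersecting with $\CL_t$ yields $A \subseteq B$. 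It therefore suffices to prove that, for a fixed realization $\phi$ and any $A \subseteq B \subseteq \CL_t$, the marginal gain of adding the single node $v_t$ is non-increasing in the base set, i.e. $f_{\mathcal{G}^L}(A \cup \{v_t\},\phi) - f_{\mathcal{G}^L}(A,\phi) \geq f_{\mathcal{G}^L}(B \cup \{v_t\},\phi) - f_{\mathcal{G}^L}(B,\phi)$.

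For this last step I would use that, once $\phi$ is fixed, the set $\sigma(\CS,\phi)$ of active nodes in $\mathcal{G}^L$ is exactly the set of nodes reachable from $\CS$ along the live edges of $\phi$; consequently $\sigma(\CS \cup \{v_t\},\phi) = \sigma(\CS,\phi) \cup \sigma(\{v_t\},\phi)$, so that $f_{\mathcal{G}^L}(\CS \cup \{v_t\},\phi) - f_{\mathcal{G}^L}(\CS,\phi) = |\sigma(\{v_t\},\phi) \setminus \sigma(\CS,\phi)|$. Because $A \subseteq B$ implies $\sigma(A,\phi) \subseteq \sigma(B,\phi)$, we get $\sigma(\{v_t\},\phi) \setminus \sigma(A,\phi) \supseteq \sigma(\{v_t\},\phi) \setminus \sigma(B,\phi)$, and taking cardinalities gives the desired inequality (the degenerate case $v_t \in B$, in which one or both marginal gains vanish, is covered automatically). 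The one point that needs genuine care is the identity $\sigma(\CS \cup \{v_t\},\phi) = \sigma(\CS,\phi) \cup \sigma(\{v_t\},\phi)$: it relies on the diffusion being completely determined by $\phi$, together with the modelling choices that active nodes are never deactivated and that the layered graph is feedforward, so that ``being active'' really does coincide with live-edge reachability. I expect this verification, rather than the set-algebra that follows, to be the main obstacle.
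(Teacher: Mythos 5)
Your proposal is correct and follows essentially the same route as the paper: reduce to layer $\CL_t$ via Lemma~\ref{lem:2}, then exploit the coverage structure $f_{\mathcal{G}^L}(\CA,\phi)=\bigl|\bigcup_{v\in\CA}\CR(v,\phi)\bigr|$ of live-edge reachability to get the diminishing-returns inequality for $A\subseteq B\subseteq\CL_t$. Your explicit identity $\sigma(\CS\cup\{v_t\},\phi)=\sigma(\CS,\phi)\cup\sigma(\{v_t\},\phi)$ is just a spelled-out form of the union representation the paper invokes, so the two arguments coincide.
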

\begin{proof}
  Let $\CR(v_t,\phi)$ denotes the set of nodes that can be reached from node $v_t$ via a path consisting of live edges, under realization $\phi$. For any $\CA \subseteq \CL_t$ (layer $t$
  of $\mathcal{G}^L$), we have $f_{\mathcal{G}^L}(\CA,\phi) = | \cup_{v \in \CA} \CR(v,\phi) |$.
  Let us now consider the quantity  $f_{\mathcal{G}^L}( \CA \cup \{v_t\},\phi) - f_{\mathcal{G}^L}(\CA,\phi)$ to be equal to the number of elements of $\CR(v_t,\phi)$ that are not already contained in $\cup_{v \in \CA} \CR(v,\phi)$.
  Clearly, this quantity is larger or equal to the number of elements of $\CR(v_t,\phi)$ that are not contained in the \textit{bigger} set $\cup_{v \in \CB} \CR(v,\phi)$, for any $\CA \subseteq \CB \subseteq \CL_t$. Therefore, it holds that:
  \begin{equation*}
    f_{\mathcal{G}^L}(\CA\cup \{v_t\},\phi) - f_{\mathcal{G}^L}(\CA,\phi) \geq f_{\mathcal{G}^L}( \CB \cup \{v_t\},\phi) - f_{\mathcal{G}^L}(\CB,\phi).
  \end{equation*}
  Setting  $\CA = \CL_t \cap dom(\psi)$, $\CB = \CL_t \cap dom(\psi')$ and using Lem.~\ref{lem:2}, we get: $\delta_{\phi}(v|\psi) \geq \delta_{\phi}(v|\psi')$.
\end{proof}
\begin{lemma} \label{lem:4}
  For partial realizations $\psi \subseteq \psi'$ with $\mathcal{T}(\psi) < \mathcal{T}(\psi')$ and any $v \in \CV \setminus dom(\psi')$, we get $\delta_{\phi}(v|\psi) \geq 1 + \delta_{\phi}(v|\psi').$
\end{lemma}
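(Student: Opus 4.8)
The plan is to push everything through the reachability description already used in the proof of Lemma~\ref{lem:3}, and then combine a time‑monotonicity fact with a ``the observed‑active set only grows'' fact. Write $t=\mathcal{T}(\psi)$ and $t'=\mathcal{T}(\psi')$ (so $t<t'$), put $A=\CL_t\cap dom(\psi)$ and $B=\CL_{t'}\cap dom(\psi')$, and for a layered node $x$ let $\CR(x,\phi)$ denote the set of nodes reachable from $x$ through live edges under $\phi$. By Lemma~\ref{lem:2} together with the identity $f_{\mathcal{G}^L}(\CX,\phi)=|\bigcup_{x\in\CX}\CR(x,\phi)|$ (valid for any set $\CX$ of layered nodes, exactly as in Lemma~\ref{lem:3}),
\begin{equation*}
\delta_{\phi}(v|\psi)=\left|\CR(v_t,\phi)\setminus\bigcup_{a\in A}\CR(a,\phi)\right|,\qquad
\delta_{\phi}(v|\psi')=\left|\CR(v_{t'},\phi)\setminus\bigcup_{b\in B}\CR(b,\phi)\right|.
\end{equation*}
It therefore suffices to show that the second set above is contained in the first and that the first contains an extra element.

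First I would record the two consequences of our modelling assumptions. Because active nodes are never deactivated, every ``vertical'' edge $x_s\to x_{s+1}$ is live with probability one; in particular $v_t\to v_{t+1}\to\cdots\to v_{t'}$ is a live path, so $\CR(v_{t'},\phi)\subseteq\CR(v_t,\phi)$, and since $\mathcal{G}^L$ is feedforward the set $\CR(v_{t'},\phi)$ contains only nodes on layers $\ge t'$, whence $v_t\notin\CR(v_{t'},\phi)$. Moreover $v_t\notin\bigcup_{a\in A}\CR(a,\phi)$: feedforwardness makes $a_t$ the unique layer‑$t$ node reachable from a layer‑$t$ node $a_t$, and $v_t\notin A$ because $v\notin dom(\psi')\supseteq dom(\psi)$. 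Hence $v_t$ lies in the left‑hand set of the display but not in $\CR(v_{t'},\phi)$ — this is the source of the ``$+1$'' (in fact all of $v_t,\dots,v_{t'-1}$ work, but one is enough).

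The crux is the inclusion $(\bigcup_{a\in A}\CR(a,\phi))\cap\CR(v_{t'},\phi)\subseteq\bigcup_{b\in B}\CR(b,\phi)$. To prove it, take $a_t\in A$ and $x_s\in\CR(a_t,\phi)\cap\CR(v_{t'},\phi)$; by feedforwardness $s\ge t'$, and a live path from $a_t$ to $x_s$ meets layer $t'$ in a single node $z_{t'}$, so $x_s\in\CR(z_{t'},\phi)$. It remains to check $z_{t'}\in dom(\psi')$, i.e.\ $z_{t'}\in B$. This is a forward‑closure property of the layered/myopic model: an observed‑active node $w_r\in dom(\psi')$ has all of its out‑edges observed by $\psi'$, so along any live $\phi$‑path leaving $w_r$ the next vertex, provided its layer index does not exceed $\mathcal{T}(\psi')=t'$, is again observed active; iterating along the path from $a_t\in dom(\psi)\subseteq dom(\psi')$ down to layer $t'$ gives $z_{t'}\in dom(\psi')$. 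I expect this to be the main obstacle, since it is exactly where one must pin down what a partial realization ``up to time $t'$'' records under myopic feedback; it is also here that the no‑deactivation assumption is essential, since without persistent vertical edges neither this closure nor $\CR(v_{t'},\phi)\subseteq\CR(v_t,\phi)$ survives.

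Finally I would assemble the pieces. By the crux inclusion, $\CR(v_{t'},\phi)\setminus\bigcup_{b\in B}\CR(b,\phi)\subseteq\CR(v_{t'},\phi)\setminus\bigcup_{a\in A}\CR(a,\phi)$, and since $\CR(v_{t'},\phi)\subseteq\CR(v_t,\phi)$ the latter is contained in $\CR(v_t,\phi)\setminus\bigcup_{a\in A}\CR(a,\phi)$. This last set also contains $v_t$, which is not in $\CR(v_{t'},\phi)$ and hence not in $\CR(v_{t'},\phi)\setminus\bigcup_{b\in B}\CR(b,\phi)$; therefore
\begin{equation*}
\delta_{\phi}(v|\psi)=\left|\CR(v_t,\phi)\setminus\bigcup_{a\in A}\CR(a,\phi)\right|\ \ge\ \left|\CR(v_{t'},\phi)\setminus\bigcup_{b\in B}\CR(b,\phi)\right|+1\ =\ \delta_{\phi}(v|\psi')+1,
\end{equation*}
which is the claim. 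As a byproduct this also yields the cross‑layer strengthening of Lemma~\ref{lem:3}, namely $\delta_{\phi}(v|\psi)\ge\delta_{\phi}(v|\psi')$ whenever $\psi\subseteq\psi'$.
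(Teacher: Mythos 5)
Your proof is correct, but it follows a genuinely different route from the paper. The paper fixes first the case $\mathcal{T}(\psi')=\mathcal{T}(\psi)+1$, introduces an intermediate partial realization $\psi_+$ (observe one more diffusion step after $\psi$ without seeding), decomposes $\delta_{\phi}(v|\psi)=1+\delta_{\phi}(v\cup A|\psi_+)$ using feedforwardness, drops $A$ by monotonicity, and then invokes Lemma~\ref{lem:3} via $\psi_+\subseteq\psi'$; the extension to $\mathcal{T}(\psi')=\mathcal{T}(\psi)+x$, $x>1$, is only asserted. You instead stay entirely inside the live-edge/reachability picture of Lemma~\ref{lem:3}: writing both marginal gains as cardinalities of $\CR(v_t,\phi)\setminus\bigcup_{a\in A}\CR(a,\phi)$ and $\CR(v_{t'},\phi)\setminus\bigcup_{b\in B}\CR(b,\phi)$, you prove a containment of the second set in the first (your ``crux'' inclusion) and exhibit $v_t$ as an explicit extra element, which gives the $+1$. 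This buys two things: the argument treats an arbitrary gap $t'-t\geq 1$ uniformly, with no induction or intermediate realization, and it makes the source of the additive unit concrete. Both arguments lean on the same implicit property of myopic-feedback partial realizations — that $\psi'$ records all activations up to time $\mathcal{T}(\psi')$ triggered by its observed-active nodes (the paper uses it to get $\text{dom}(\psi_+)\cap\CL_{t+1}\subseteq\text{dom}(\psi')\cap\CL_{t+1}$, you use it to place the layer-$t'$ node of each live path into $B$), and on the no-deactivation assumption (the paper via feedforward persistence of rewards, you via the always-live vertical edges giving $\CR(v_{t'},\phi)\subseteq\CR(v_t,\phi)$); so neither proof is more demanding in its hypotheses, and yours is, if anything, the more complete treatment of the $x>1$ case.
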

\begin{proof}
Let us first consider w.l.o.g. that  $\mathcal{T}(\psi) = t$ and $\mathcal{T}(\psi') = t + 1$. Here, the node $v_t$ is activated in $\mathcal{G}^L$, after observing $\psi$. Since $v \notin \text{dom}(\psi')$ by assumption, then $v \notin \text{dom}(\psi)$ and therefore $v$ is not already active. Let $\psi_+$ denote the partial realization combining $\psi$ and observing one more step of the process - from layer $t$
to layer $t+1$ - without adding any seed node, w.r.t. $\phi$. Also, let $A$ denote the set of active nodes of layer $t+1$ \textit{that would not have been activated if $u_t$ has not been selected as seed node}, except $v_{t+1}$. In this scenario, we get: 
\begin{equation*}
  \delta_{\phi}(v|\psi) = 1 + \delta_{\phi}(v \cup A |\psi_+) \geq 1 + \delta_{\phi}(v|\psi_+) \geq 1 + \delta_{\phi}(v|\psi').
\end{equation*}
The first equality comes from the fact that $\mathcal{G}^L$ is feedforward, therefore activating $v$ brings a reward of 1 at time $t$, plus the reward from the future.
The second inequality is due to the monotonicity of the set function $\delta_{\phi}(.|\psi_+)$.

The last inequality  holds due to the fact that $\psi_+ \subseteq \psi'$ (application of Lem.~\ref{lem:3}). Indeed, since $\psi \subseteq \psi'$, all nodes observed to be active by $\psi$ at time $t$ are also observed to be active by $\psi'$. Therefore, if the status of an edge from layer $t$ to $t+1$ is observed under $\psi_+$, it is also observed under $\psi'$.
As a consequence, we notice that $\text{dom}(\psi_+) \cap \mathcal{L}_{t+1} \subseteq \text{dom}(\psi') \cap \mathcal{L}_{t+1}$, i.e., all the nodes observed to be active by $\psi_+$ on layer $t+1$ of $\mathcal{G}^L$ are also observed to be active by $\psi'$.
In this point, it should be recalled that $\psi$, $\psi_+$ and $\psi'$ are all consistent w.r.t. the same ground truth realization $\phi$. 

Finally, it can be verified that this inequality still holds for $\mathcal{T}(\psi') = t + x$ with $x>1$. Actually, tighter inequalities could be obtained for $x>1$, but the inequality of this Lemma is more simple, and sufficient for the proof of Theorem~\ref{Theorem1}.
\end{proof}

\section{Theoretical Guarantees for the Myopic Adaptive Greedy Strategy}
\label{sec:TheroreticalAnalysis}

In this section, we introduce the  \textit{myopic adaptive greedy policy}. Using our layered graph representation, we prove that this policy is guaranteed to provide a $(1 - 1/e)$-approximation of the optimal policy, in the framework presented in Sec.~\ref{sec:MyopicFeedback}.  

\noindent\textbf{Myopic adaptive greedy policy} The \textit{myopic adaptive greedy policy} starts with an empty set $\CS = \emptyset$, and repeatedly chooses as seed the node that gives the maximum expected marginal gain under partial realization $\psi$.
If the graph is too large, expected marginal gains can be estimated via Monte Carlo simulations as in \citet{kdd/KempeKT03}.
For simplicity reasons, we assume w.l.o.g. that only one seed node is selected at each time step.
A sketch of our policy is presented in Alg.~\ref{alg:1}.

\begin{algorithm}[h!]
  \caption{Myopic adaptive greedy policy}
  \label{alg:1}
  \begin{algorithmic}[1]
    \REQUIRE $\mathcal{G}, T$
    \STATE $\psi \leftarrow \emptyset,~ \CS \leftarrow \emptyset$
    \FOR{t = 1 {\bfseries to} T}
    \STATE Compute $\Delta_{\tilde{f}}(v|\psi), \forall v \in \CV \setminus \CS $
    \STATE Select $v^* \in \argmax\limits_{v \in \CV \setminus S}  \Delta_{\tilde{f}}(v|\psi)$
    \STATE $\CS \leftarrow \CS  \cup \{v^*\}$
    \STATE Update $\psi$ observing (one-step) myopic feedback
    \STATE $\CS \leftarrow \CS  \cup dom(\psi)$
    \ENDFOR
    \RETURN $\CS$  (final set of influenced nodes)
  \end{algorithmic}
\end{algorithm}

\subsection{Theoretical guarantees}

We are now ready to formally state our main result that constitutes an approximation guarantee for the proposed strategy.
Actually, the key point of our proof is to check that the proposed utility function $\tilde{f}_{\CG}$ is adaptive monotone and adaptive submodular w.r.t. $p(\phi)$.
These properties in conjunction with the result of \citet{Golovin:2011} complete our proof.
\begin{theorem}
  \label{Theorem1}
  The adaptive greedy policy $\pi^{\text{g}}$ obtains at least $(1 - 1/e)$ of the value of the best policy for the adaptive influence maximization problem under the \emph{modified} IC
  model with myopic feedback and $\tilde{f}$ as utility function. In other words, if $\tilde{f}_{avg}(\pi^{\text{g}})  \defn \mathbb{E}_{\Phi} [\tilde{f}_{\mathcal{G}}(E(\pi,\Phi),\Phi)]$, we get that:
  \begin{equation*}
    \tilde{f}_{avg}(\pi^{\text{g}})  \geq (1 - 1/e) \tilde{f}_{avg}(\pi^*).
  \end{equation*}
\end{theorem}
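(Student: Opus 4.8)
The plan is to derive Theorem~\ref{Theorem1} from the result of \citet{Golovin:2011} recalled in Section~\ref{sec:Preliminaries}: it suffices to prove that $\tilde{f}_{\mathcal{G}}$ is \emph{adaptive monotone} and \emph{adaptive submodular} with respect to $p(\phi)$ under the myopic feedback model, since the myopic adaptive greedy policy of Alg.~\ref{alg:1} is exactly the adaptive greedy policy $\pi^{\text{g}}$ associated with $\tilde{f}_{\mathcal{G}}$. Everything is carried out on the layered graph $\mathcal{G}^L$: by Lemma~\ref{lem:1} we replace $\tilde{f}_{\mathcal{G}}$ by $f_{\mathcal{G}^L}$, and, identifying ``choose $v$ after observing $\psi$ with $\mathcal{T}(\psi)=t$'' with ``choose $v_t$ in $\mathcal{G}^L$'', the conditional marginal benefit reads $\Delta_{\tilde{f}}(v|\psi)=\mathbb{E}_{\Phi}[\delta_{\Phi}(v|\psi) | \Phi\sim\psi]$ with $\delta$ as in Def.~\ref{def:5}.

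Adaptive monotonicity is the short half. By Lemma~\ref{lem:2} together with the reach-set identity $f_{\mathcal{G}^L}(\CA,\phi)=| \bigcup_{w\in\CA}\CR(w,\phi) |$ used in the proof of Lemma~\ref{lem:3}, for every $\phi\sim\psi$ we have $\delta_{\phi}(v|\psi)=| \CR(v_t,\phi)\setminus\bigcup_{w\in\CL_t\cap dom(\psi)}\CR(w,\phi) |\geq 0$ (indeed $\geq 1$ when $v\notin dom(\psi)$, since $v_t$ itself is new). Taking the conditional expectation over $\Phi\sim\psi$ gives $\Delta_{\tilde{f}}(v|\psi)\geq 0$, i.e., adaptive monotonicity.

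For adaptive submodularity, fix $\psi\subseteq\psi'$ and $v\in\CV\setminus dom(\psi')$ and set $t=\mathcal{T}(\psi)\leq t'=\mathcal{T}(\psi')$. The first ingredient is a ``Markov on layers'' observation: by Lemma~\ref{lem:2} the random variable $\delta_{\Phi}(v|\psi)$ depends on $\Phi$ only through the states of the edges joining layers $t,t+1,\dots,T$ (the set $\CL_t\cap dom(\psi)$ being fixed data of $\psi$), whereas $\{\Phi\sim\psi\}$ is an event about edges joining layers $1,\dots,t$ only; under the modified IC product measure these two edge families are independent, so $\Delta_{\tilde{f}}(v|\psi)=\mathbb{E}_{\Phi}[\delta_{\Phi}(v|\psi)]$ \emph{unconditionally}, and similarly $\Delta_{\tilde{f}}(v|\psi')=\mathbb{E}_{\Phi}[\delta_{\Phi}(v|\psi')]$. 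If $t=t'$, then $\psi'$ also constrains nothing beyond layer $t$, and Lemma~\ref{lem:3} yields $\delta_{\phi}(v|\psi)\geq\delta_{\phi}(v|\psi')$ pointwise in $\phi$; taking expectations closes this case. If $t<t'$, the two coverage-gain variables are no longer pointwise comparable across the time gap, and this is where the ``$+1$'' of Lemma~\ref{lem:4} is spent: I would split $\{\Phi\sim\psi\}$, via the tower rule, according to the states of the extra edges that $\psi'$ reveals on layers $t,\dots,t'$; on the atom that coincides with $\psi'$ Lemma~\ref{lem:4} gives $\delta_{\phi}(v|\psi)\geq 1+\delta_{\phi}(v|\psi')$, on the remaining atoms adaptive monotonicity already gives a nonnegative contribution (with $\delta(\cdot|\rho)\equiv 0$ whenever the extra observations turned $v$ active), and one uses that $v\notin dom(\psi')$ forbids $\psi'$ from ever observing $v$'s own outgoing edges --- so the ``fresh'' activation opportunities of the modified IC model keep those edges prior-distributed --- to conclude that the additive $1$ is exactly the cushion making $\Delta_{\tilde{f}}(v|\psi)\geq\Delta_{\tilde{f}}(v|\psi')$. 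With adaptive monotonicity and adaptive submodularity in hand, Theorem~\ref{Theorem1} follows directly from \citet{Golovin:2011}.

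The step I expect to be the main obstacle is precisely the case $\mathcal{T}(\psi)<\mathcal{T}(\psi')$: the two marginal benefits sit under genuinely different posteriors, no pointwise domination of marginal gains is available across the time gap, and the argument hinges jointly on (a) the ``$+1$'' slack of Lemma~\ref{lem:4}, (b) the modified IC model, which keeps $v$'s outgoing edges unexhausted and prior-distributed under $\psi'$, and (c) the no-deactivation assumption, which allows a node that is active on an early layer to be credited on every later layer --- exactly the two assumptions the paper afterwards shows to be necessary. Making the probabilistic bookkeeping rigorous (which edges $\psi'$ may possibly have revealed given $v\notin dom(\psi')$, and checking that the slack always suffices) is the crux of the proof; adaptive monotonicity and the equal-time case are routine consequences of Lemmas~\ref{lem:2}--\ref{lem:3}.
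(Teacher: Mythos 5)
Your overall strategy coincides with the paper's: reduce Theorem~\ref{Theorem1} to adaptive monotonicity and adaptive submodularity of $\tilde{f}_{\mathcal{G}}$ and invoke \citet{Golovin:2011}, work on the layered graph through Lemmas~\ref{lem:1}--\ref{lem:4}, and your handling of adaptive monotonicity and of the equal-time case $\mathcal{T}(\psi)=\mathcal{T}(\psi')$ (pointwise use of Lemma~\ref{lem:3} combined with the observation that $\delta_{\Phi}(\cdot|\psi)$ depends only on edge statuses from layer $\mathcal{T}(\psi)$ onward, which are independent of the conditioning event) is exactly the paper's truncation argument with $\tilde{\phi}$ and $\tilde{p}(\tilde{\phi})$, phrased as an independence statement.

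The genuine gap is the cross-time case $\mathcal{T}(\psi)<\mathcal{T}(\psi')$, which you yourself leave as the unresolved ``crux''. The decomposition you sketch --- split $\{\Phi\sim\psi\}$ into atoms according to the extra edges $\psi'$ reveals between layers $t$ and $t'$, apply Lemma~\ref{lem:4} only on the atom coinciding with $\psi'$, and use nonnegativity on the remaining atoms --- only delivers $\Delta(u|\psi)\geq \mathbb{P}[\Phi\sim\psi'\mid\Phi\sim\psi]\,\bigl(1+\Delta(u|\psi')\bigr)$, which does \emph{not} imply $\Delta(u|\psi)\geq\Delta(u|\psi')$: it fails whenever $\Delta(u|\psi')$ exceeds $\mathbb{P}[\psi'|\psi]/(1-\mathbb{P}[\psi'|\psi])$, and this certainly happens since the conditional probability of the particular refinement $\psi'$ can be arbitrarily small while many layers and reachable nodes remain, making $\Delta(u|\psi')\gg 1$. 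The ``$+1$'' of Lemma~\ref{lem:4} is not a cushion that compensates for the probability mass on the other atoms, and the remark that $v\notin dom(\psi')$ keeps $v$'s outgoing edges prior-distributed does not repair this. The missing step is the one the paper takes, and which your own ``Markov on layers'' observation already supplies: apply Lemma~\ref{lem:4} under the single conditioning $\Phi\sim\psi$ to get $\Delta(u|\psi)\geq 1+\mathbb{E}[\delta_{\Phi}(u|\psi')\mid\Phi\sim\psi]$, then use that $\delta_{\Phi}(u|\psi')$ depends only on edges from layer $\mathcal{T}(\psi')$ onward, so its expectation is the same under $p(\cdot|\psi)$ and $p(\cdot|\psi')$ (the paper's identity $\sum_{\phi\sim\tilde{\phi}}p(\phi|\psi)=\tilde{p}(\tilde{\phi})=\sum_{\phi\sim\tilde{\phi}}p(\phi|\psi')$), which yields the stronger bound $\Delta(u|\psi)\geq 1+\Delta(u|\psi')$ for a one-step gap; larger gaps then follow by the paper's telescoping over intermediate subrealizations. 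Without this step your argument does not establish adaptive submodularity, and hence does not prove the theorem.
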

\begin{proof}
  Our objective is to prove that the utility function $\tilde{f}_{\mathcal{G}}$ is adaptive monotonic and adaptive submodular w.r.t. $p(\phi)$.
  Adaptive monotonicity is straightforward, since $\tilde{f}_{\mathcal{G}}(\cdot,\phi)$ is itself monotonic $\forall \phi$.

  Let us consider two subrealizations $\psi$ and $\psi'$ with $\psi \subseteq \psi'$ and $u \notin \text{dom}(\psi')$. To prove that the proposed utility function $\tilde{f}_{\mathcal{G}}$ is adaptive submodular, we need to verify that $\Delta(u|\psi) \geq \Delta(u|\psi')$, i.e., 
    \begin{align*}
      &\mathbb{E}_{\Phi}\left[ \tilde{f}_{\mathcal{G}}(\text{dom}(\psi) \cup \{u_{\mathcal{T}(\psi)}\},\Phi) - \tilde{f}_{\mathcal{G}}(\text{dom}(\psi),\Phi) | \Phi \sim \psi \right] \geq \\
      &\mathbb{E}_{\Phi}\left[ \tilde{f}_{\mathcal{G}}(\text{dom}(\psi') \cup \{u_{\mathcal{T}(\psi')}\},\Phi) - \tilde{f}_{\mathcal{G}}(\text{dom}(\psi'),\Phi) | \Phi \sim \psi' \right].
    \end{align*}
  According to  Def.~\ref{def:5}, we need to check that:
  \begin{equation*}
    \sum_{\phi} p(\phi | \psi) \delta_{\phi}(u|\psi) \geq \sum_{\phi} p(\phi | \psi') \delta_{\phi}(u|\psi'),
  \end{equation*}
  where $p(\phi|\psi) \defn \mathbb{P}[\Phi = \phi| \Phi \sim \psi]$. Note that $p(\phi|\psi) = 0$ if $\phi$ is inconsistent with $\psi$. Otherwise, if $\phi \sim \psi$, we have:
    \begin{equation*}
      p(\phi|\psi) = \prod_{t = 1}^{T-1} \prod \limits_{\underset{\text{unobserved by } \psi}{(v_{t},w_{t+1}) \in \mathcal{E}_{\mathcal{G}^L}}} p_{vw}^{X_{v_{t}w_{t+1}}} (1 - p_{vw})^{1 - X_{v_{t}w_{t+1}}},
    \end{equation*}
  where $\mathcal{E}_{\mathcal{G}^L}$ is the set of edges of $\mathcal{G}^L$ (the layered graph representation of $\mathcal{G}$), and $X_{v_{t}w_{t+1}} \sim \mathcal{B}(p_{vw})$ is a Bernoulli r.v. whose realization indicates whether the edge $(v_{t},w_{t+1})$ of $\mathcal{G}^L$ is live or dead in the ground truth realization $\phi$. More specifically, it indicates if active node $v_t$ succeeds to activate its neighbor $w$ at time $t+1$, or not.

  In order to obtain our result, let us first recall that $\delta_{\phi}(u|\psi) \geq \delta_{\phi}(u|\psi')$. There are three possible different situations, depending on $\mathcal{T}(\psi)$ and $\mathcal{T}(\psi')$. The first scenario, $\mathcal{T}(\psi) > \mathcal{T}(\psi')$, is actually impossible, since it will violate our assumption that $\psi \subseteq \psi'$. For the second where $\psi \subseteq \psi'$ with $\mathcal{T}(\psi) = \mathcal{T}(\psi')$ a direct application of Lemma~\ref{lem:3} gives that $\delta_{\phi}(u|\psi) \geq \delta_{\phi}(u|\psi')$. In the last case, $\psi \subseteq \psi'$ with $\mathcal{T}(\psi) < \mathcal{T}(\psi')$, we get $\delta_{\phi}(u|\psi) \geq 1 +\delta_{\phi}(u|\psi')$ according to Lemma~\ref{lem:4}. 

  \noindent\textbf{Proof of $\Delta(u|\psi) \geq \Delta(u|\psi')$ when $\mathcal{T}(\psi) = \mathcal{T}(\psi')$: }
  Using the aforemetioned results, we will prove that $\Delta(u|\psi) \geq \Delta(u|\psi')$ in the scenario where $\mathcal{T}(\psi) = \mathcal{T}(\psi')$. It can be easily verified that if $\mathcal{T}(\psi) = T$, the equality $\Delta(u|\psi) = \Delta(u|\psi') = 1$ holds. Now, we focus on $\mathcal{T}(\psi) < T$.

  To begin, let us introduce some new objects. Let $\mathcal{\tilde{G}}^L$ be a truncated version of $\mathcal{G}^L$ where we removed the layers and edges before time step $\mathcal{L}_{\mathcal{T}(\psi)}$. Equivalently, $\mathcal{\tilde{G}}$ is a graph with the same structure as $\CG$, but we start the IM problem at $t = \CT(\psi)$ ($=\CT(\psi')$) instead of $t=1$ while some of the nodes are already active at the beginning of the process (the ones observed to be active on $\CL_{\CT(\psi)}$). Finally, let $\tilde{\phi}$ be the truncated version of $\phi$ on $\mathcal{\tilde{G}}^L$, i.e. all Bernoulli r.v. on the edges between layers $\CL_{\CT(\psi)}$ and $\CL_T$ have the same status. We denote as  $\phi \sim \tilde{\phi}$, the consistency between $\phi$ and $\tilde{\phi}$. We also have:
  \begin{equation*}
    \tilde{p}(\tilde{\phi}) = \prod_{t = \mathcal{T}(\psi)}^{T-1} \prod \limits_{(v_{t},w_{t+1}) \in \mathcal{E}_{\mathcal{\tilde{G}}^L}} p_{vw}^{X_{v_{t}w_{t+1}}} (1 - p_{vw})^{1 - X_{v_{t}w_{t+1}}}.
  \end{equation*}
  Now let us go back to our primary goal where we have:
  \begin{equation*}
    \Delta(u|\psi) - \Delta(u|\psi') = \sum_{\phi} p(\phi | \psi) \delta_{\phi}(u|\psi) - \sum_{\phi} p(\phi | \psi') \delta_{\phi}(u|\psi').
  \end{equation*}
  The probabilities $p(\phi | \psi)$ and $p(\phi | \psi')$ are defined for the realizations $\phi \sim \psi$ and $\phi \sim \psi'$, respectively.
  However, according to Lemma~\ref{lem:2}, randomness on marginal gains comes only from the unknown statuses of the edges from layers $\mathcal{L}_{\mathcal{T}(\psi)}$ to $\mathcal{L}_{T}$ of the layered graph representation $\mathcal{G}^L$. The actual statuses (live or dead) of edges connecting past layers do not have any impact at $\delta_{\phi}(u|\psi)$ and $\delta_{\phi}(u|\psi')$, respectively.
  Since $\tilde{p}(\tilde{\phi}) = \sum\limits_{\phi \sim \phi'} p(\phi | \psi)$, and
  \begin{align*}
    \delta_{\tilde{\phi}}(u|\psi) &= \tilde{f}_{\mathcal{\tilde{G}}}(\text{dom}(\psi) \cup \{u_{\mathcal{T}(\psi)}\},\tilde{\phi}) - \tilde{f}_{\mathcal{\tilde{G}}}(\text{dom}(\psi),\tilde{\phi}) \\
                                  &= \tilde{f}_{\mathcal{G}}(\text{dom}(\psi) \cup \{u_{\mathcal{T}(\psi)}\},\phi) - \tilde{f}_{\mathcal{G}}(\text{dom}(\psi),\phi) \\
                                  &= \delta_{\phi}(u|\psi),
  \end{align*}
  we conclude that:
   $ \sum_{\phi} p(\phi | \psi) \delta_{\phi}(u|\psi) = \sum_{\tilde{\phi}} \tilde{p}(\tilde{\phi}) \delta_{\tilde{\phi}}(u|\psi).$
  In the same way, we get that $\sum_{\phi} p(\phi | \psi') \delta_{\phi}(u|\psi') = \sum_{\tilde{\phi}} \tilde{p}(\tilde{\phi}) \delta_{\tilde{\phi}}(u|\psi')$, with $\delta_{\phi}(u|\psi') = \delta_{\tilde{\phi}}(u|\psi')$.
  Therefore, we derive that:
    \begin{align*}
      \Delta(u|\psi) - \Delta(u|\psi') &= \sum_{\phi} p(\phi | \psi) \delta_{\phi}(u|\psi) - \sum_{\phi} p(\phi | \psi') \delta_{\phi}(u|\psi') \\
                                       &= \sum_{\tilde{\phi}} \tilde{p}(\tilde{\phi}) \delta_{\tilde{\phi}}(u|\psi) - \sum_{\tilde{\phi}} \tilde{p}(\tilde{\phi}) \delta_{\tilde{\phi}}(u|\psi') \\
                                       &= \sum_{\tilde{\phi}} \tilde{p}(\tilde{\phi}) \left(\delta_{\tilde{\phi}}(u|\psi) -  \delta_{\tilde{\phi}}(u|\psi') \right) \geq 0.
    \end{align*}
  The last inequality holds, as $\delta_{\tilde{\phi}}(u|\psi) \geq  \delta_{\tilde{\phi}}(u|\psi')$.
  
  \noindent\textbf{Proof of $\Delta(u|\psi) \geq \Delta(u|\psi')$ when $\mathcal{T}(\psi') = \mathcal{T}(\psi) + 1$}
  Let us now focus on the scenario where $\mathcal{T}(\psi) < \mathcal{T}(\psi')$. Initially, we consider the case where $\mathcal{T}(\psi') =\mathcal{T}(\psi) + 1$. We define $\tilde{\phi}$ and $\tilde{p}(\tilde{\phi})$ as before but w.r.t. $\psi'$ (i.e. the first layer of $\tilde{\mathcal{G}}^L$ is $\mathcal{L}_{\mathcal{T}(\psi')}$). It is important to remark that:
    \begin{align*}
  \sum_{\phi \sim \tilde{\phi}} p(\phi | \psi') = \sum_{\phi \sim \tilde{\phi}} \prod_{t = 1}^{T-1} \prod \limits_{\underset{\text{unobserved by } \psi'}{(v_{t},w_{t+1}) \in \mathcal{E}_{\mathcal{G}^L}}} p_{vw}^{X_{v_{t}w_{t+1}}} (1 - p_{vw})^{1 - X_{v_{t}w_{t+1}}}& \\ 
                                                =\sum_{\phi \sim \tilde{\phi}} \left(\prod_{t = 1}^{ \mathcal{T}(\psi')-1} \prod \limits_{\underset{\text{unobserved by } \psi'}{(v_{t}w_{t+1}) \in \mathcal{E}_{\mathcal{G}^L}}} p_{vw}^{X_{v_{t}w_{t+1}}} (1 - p_{vw})^{1 - X_{v_{t}w_{t+1}}} \right)& \\
                                                \left( \underbrace{\prod_{t = \mathcal{T}(\psi')}^{T-1} \prod \limits_{(v_{t},w_{t+1}) \in \mathcal{E}_{\mathcal{\tilde{G}}^L}} p_{vw}^{X_{v_{t}w_{t+1}}} (1 - p_{vw})^{1 - X_{v_{t}w_{t+1}}}}_{ = \tilde{p}(\tilde{\phi}) \text{ (same for all $\phi \sim \tilde{\phi}$)}} \right)& \\
                                                = \tilde{p}(\tilde{\phi}) \underbrace{\sum_{\phi \sim \tilde{\phi}}\prod_{t = 1}^{ \mathcal{T}(\psi')-1} \prod \limits_{\underset{\text{unobserved by } \psi'}{(v_{t}w_{t+1}) \in \mathcal{E}_{\mathcal{G}^L}}} p_{vw}^{X_{v_{t}w_{t+1}}} (1 - p_{vw})^{1 - X_{v_{t}w_{t+1}}}}_{=1} = \tilde{p}(\tilde{\phi}).&
    \end{align*}
  In a similar way, we get that $\sum\limits_{\phi \sim \tilde{\phi}} p(\phi | \psi) = \tilde{p}(\tilde{\phi})$.
  Thus, we get that
    \begin{align*}
      \Delta(u|\psi) &= \sum_{\phi} p(\phi | \psi) \delta_{\phi}(u|\psi) 
                     \geq \sum_{\phi} p(\phi | \psi) \Big(1 + \delta_{\phi}(u|\psi') \Big) \\
                     &= \underbrace{\sum_{\phi} p(\phi | \psi)}_{= 1} + \sum_{\phi} p(\phi | \psi) \delta_{\phi}(u|\psi') 
                     = 1 + \sum_{\phi} p(\phi | \psi) \delta_{\phi}(u|\psi') \\
                     &= 1 + \sum_{\tilde{\phi}} \underbrace{\sum_{\phi \sim \tilde{\phi}} p(\phi | \psi)}_{= \tilde{p}(\tilde{\phi)}} \underbrace{\delta_{\phi}(u|\psi')}_{= \delta_{\tilde{\phi}}(u|\psi')} 
                     = 1 + \sum_{\tilde{\phi}} \tilde{p}(\tilde{\phi}) \delta_{\tilde{\phi}}(u|\psi'),
    \end{align*}
  and
    \begin{align*}
      \Delta(u|\psi') &= \sum_{\phi} p(\phi | \psi') \delta_{\phi}(u|\psi') = \sum_{\tilde{\phi}} \underbrace{\sum_{\phi \sim \tilde{\phi}} p(\phi | \psi')}_{= \tilde{p}(\tilde{\phi)}} \underbrace{\delta_{\phi}(u|\psi')}_{= \delta_{\tilde{\phi}}(u|\psi')} 
                      = \sum_{\tilde{\phi}} \tilde{p}(\tilde{\phi}) \delta_{\tilde{\phi}}(u|\psi').
    \end{align*}
  Therefore, we conclude that: $\Delta(u|\psi) - \Delta(u|\psi') \geq 0$.

  \noindent\textbf{Proof of $\Delta(u|\psi) \geq \Delta(u|\psi')$ when $\mathcal{T}(\psi') = \mathcal{T}(\psi) + x$ with $x>1$}

  So far, we focused on the case $\mathcal{T}(\psi') = \mathcal{T}(\psi) + 1$. Actually, it is quite straightforward to extend results to the scenario where we consider partial realizations $\psi_{t}$, $\psi_{t+x}$ with $\psi_{t} \subseteq \psi_{t+x}$, $\mathcal{T}(\psi_t) = t$ and $\mathcal{T}(\psi_{t+x}) = t + x$ with $x >1$.

  Let $\psi_{t+1}$, $\psi_{t+2}$, ..., $\psi_{t+x-1}$ denote partial realizations such that $\psi_{t} \subseteq \psi_{t+1} \subseteq \psi_{t+2} ... \subseteq \psi_{t+x-1} \subseteq \psi_{t+x}$. Using telescoping sum and our previous result, we obtain that:
  \begin{equation*}
    \Delta(u|\psi_t) - \Delta(u|\psi_{t+x}) = \sum_{i=0}^{x-1} \Big(\underbrace{\Delta(u|\psi_{t+i}) - \Delta(u|\psi_{t+i+1})}_{\geq 0} \Big),
  \end{equation*}
  that concludes our proof.
\end{proof}

This is the first time that such inequality is demonstrated on the adaptive setting under \emph{myopic feedback}. 
Using the generalization of the result of \citep{Golovin:2011}, we also retrieve the $(1 - e^{-\ell/\alpha k})$ bound for any $\alpha$-approximate ($\ell$-truncated) greedy policies.
It can be also verified that the bound of Theorem~\ref{Theorem1} is still valid even if we select more than one seed node at each time step.

\subsection{Modified IC model hypotheses}

In this point we discuss the two central hypotheses of the proposed modified IC model:  an active node i) has multiple opportunities to influence its neighbors, and ii) cannot be randomly deactivated over time. Actually, we demonstrate that the proposed utility function $\tilde{f}$ is adaptive submodular only in the case where these two assumptions hold. \\

\noindent\textbf{Utility function $\tilde{f}$ under standard IC model}
Let us now consider the \emph{standard} IC model with myopic feedback and $\tilde{f}$ as utility function. Actually, removing the assumption that active nodes have multiple opportunities to influence its neighbors, we get the \emph{standard} IC model where each active node has a \emph{unique} chance to influence its neighbors.

\begin{lemma} \label{lem:5}
  The utility function $\tilde{f}$ is not adaptive submodular under the \emph{standard} IC model with myopic feedback.
\end{lemma}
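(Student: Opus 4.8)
The plan is to refute adaptive submodularity of $\tilde{f}$ by exhibiting one concrete counterexample: a graph, an edge-probability assignment, a horizon $T$, and two partial realizations $\psi \subseteq \psi'$ with $u \notin \text{dom}(\psi')$ for which $\Delta(u|\psi) < \Delta(u|\psi')$; since the definition quantifies over all such pairs, a single violating pair is enough. First I would fix the instance: three nodes $s, u, a$, a directed edge $s \to a$ of probability $p$ with $0 < p < 1$ chosen close to $1$ (e.g. $p = 9/10$), a directed edge $u \to a$ of probability $1$, and horizon $T = 4$ under the \emph{standard} IC model with $\tilde{f}$. For $\psi$ I take the partial realization in which $s$ has been selected as a seed and is active at time $1$ with no edge observed, so $\text{dom}(\psi) = \{s_1\}$ and $\mathcal{T}(\psi) = 1$. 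For $\psi'$ I take the partial realization in which $s$ is again active at time $1$ but its (in the standard model, unique) attempt to activate $a$ has been observed to fail, i.e. the layered-graph edge $(s_1,a_2)$ is dead, so $\text{dom}(\psi') = \{s_1\} = \text{dom}(\psi)$ and $\mathcal{T}(\psi') = 2$. One checks immediately that $\psi \subseteq \psi'$ (both are consistent with any $\phi$ in which $(s_1,a_2)$ is dead, and $\text{dom}(\psi) \subseteq \text{dom}(\psi')$), that $\mathbb{P}[\Phi \sim \psi] = 1 > 0$ and $\mathbb{P}[\Phi \sim \psi'] = 1-p > 0$, and that $u \notin \text{dom}(\psi')$.

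Next I would evaluate the two expected marginal gains via Definition~\ref{def:5}, using that under $\tilde{f}$ an active node contributes one unit per remaining time step. Under $\psi$, inserting $u$ at time $1$ yields $u$'s own $T$ units plus whatever gain $u$ induces on $a$; but $s$ already activates $a$ at time $2$ with probability $p$, so the marginal effect of $u$ on $a$ scales only with $1-p$, giving $\Delta(u|\psi) = 2T - 1 - p(T-1)$, which for $T=4$, $p=9/10$ equals $4.3$. Under $\psi'$, the \emph{single-chance} feature of the standard IC model is decisive: having spent its only attempt, $s$ can never reach $a$ again, so $u$ becomes the \emph{sole} route to $a$, and inserting $u$ at time $2$ unlocks $a$ for all $T-2$ remaining steps (plus $u$'s own $T-1$ units), giving $\Delta(u|\psi') = 2T - 3$, which for $T=4$ equals $5$. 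Since $4.3 < 5$, we obtain $\Delta(u|\psi) < \Delta(u|\psi')$, contradicting adaptive submodularity and proving Lemma~\ref{lem:5}.

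The only real work, and the delicate point, is the bookkeeping that forces the strict inequality in the right direction: inserting $u$ one step later under $\psi'$ costs exactly one unit of $u$'s own contribution, so the ``path permanently foreclosed'' gain must strictly beat that unit; comparing $2T - 1 - p(T-1)$ with $2T - 3$ gives the requirement $p(T-1) > 2$, which is why $p$ must be bounded away from $0$ (with $T=4$, larger than $2/3$) and the horizon long enough — a smaller $p$ or $T=3$ would not break the inequality. As a sanity check I would note that this does not clash with Theorem~\ref{Theorem1}: in the \emph{modified} IC model $s$ retries $a$ at every later step, so even conditioned on $(s_1,a_2)$ being dead the node $a$ is still reached from $s$ with high probability, $u$'s marginal value under $\psi'$ stays small, and the submodular inequality is restored — precisely the role of the multiple-opportunities hypothesis, which the same example makes visible.
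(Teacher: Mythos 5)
Your counterexample is correct and follows essentially the same strategy as the paper's own proof: exhibit a pair $\psi \subseteq \psi'$ in which $\psi'$ records the failure of the node's \emph{unique} influence attempt, so that the permanently foreclosed alternative route inflates the later marginal gain beyond the earlier one, and verify the violation by direct computation of the two expected marginal gains. The paper's instance is merely more minimal (a single edge $u\to v$ with probability $p>1/2$, $T=3$, seeding the target node itself), whereas your three-node instance with a deterministic edge, $T=4$ and $p>2/3$ establishes the lemma just as well.
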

\begin{proof}
  Let us consider the network shown in Fig.~\ref{fig:lem5graph} that consists of two nodes $u$ and $v$, with $p_{uv} \defn p \in [0,1]$.  We assume that $T = 3$ and that node $u$ is already active at $t=1$. As we consider the \emph{standard}  IC model, $u$ has an \textit{unique chance} to influence $v$, at $t=2$, succeeding with probability $p$. Let also $\psi = \emptyset$: we have no information on the unique edge of this graph, we only know that $u$ is active at $t=1$. Therefore, we have $\mathcal{T}(\psi) = 1$. Moreover, let $\psi'$ contains the information that $u$ is active at $t = 1$ and that it has failed to influence $v$ at $t=2$ ($\mathcal{T}(\psi') = 2$). Since $u$ has a unique chance to influence $v$, there is no more randomness about the ground truth realization $\phi$ at this point. We have $\psi \subseteq \psi'$.

  Considering node $v$ as a seed node given subrealization $\psi$, we get that:
  \begin{equation*}
    \Delta(v|\psi) = p \times 1 + (1-p) \times 3.
  \end{equation*}
  Indeed, if the edge $(u,v)$ is dead (probability $1-p$), the marginal gain of activating $v$ at $t=1$ is equal to $3$ (nodes $v_1$, $v_2$ and $v_3$ will be activated in the layered graph). On the other hand, if the edge $(u,v)$ is live (probability $p$), $v$ will have been actived at time steps $t=2$ and $t=3$ even without the activation of $v$ at $t=1$. Therefore, the only marginal gain comes from the activation of $v$ at $t=1$, that is equal to $1$.
  Similarly, we get that
  $  \Delta(v|\psi') = 2.$
  Choosing $v$ as a seed note after observing $\psi'$, i.e., at $t=2$, leads to a marginal gain equal to $2$, rewarding the activation of $v$ at $t=2$ and $3$.

  It can be easily verified that $\Delta(v|\psi) \geq \Delta(v|\psi')$ iff $p \leq 0.5$. Therefore, the adaptive submodularity property holds only in the case where $p > 0.5$.
 \begin{figure}[t]
   \centering
   \begin{tikzpicture}[->, >=stealth', auto, semithick, node distance=2.5cm]
\tikzstyle{every state}=[fill=white,draw=black,thick,text=black,scale=1.]
\node[state] (0) {$u$};
\node[state] (1)[right of=0] {$v$};
\path
(0) edge[left,above] node{$p$} (1);
\end{tikzpicture}
   \caption{Toy graph used as counterexample at Lem.~\ref{lem:5}.}
   \label{fig:lem5graph}
 \end{figure}
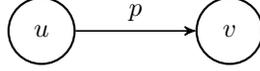
\end{proof}

It should be also mentioned that the utility function $\tilde{f}$ is adaptive submodular in the aforementioned network if we consider the \emph{modified} IC diffusion model. More specifically, we get that
\begin{align*}
  \Delta(v|\psi) &= p^2 + p(1-p) + 2 (1-p)p + 3(1-p)^2 \\
                 &= p^2 + 3p(1-p) + 3(1-p)^2
\end{align*}
and $\Delta(v|\psi') = p + 2(1-p) = 2-p.$ Therefore, we can easily check that the inequality $\Delta(v|\psi) \geq \Delta(v|\psi')$ holds for any $p \in [0,1]$.

\noindent\textbf{Non-Progressive Adaptive Submodular IM}
In this point, we examine the scenario where the second main hypothesis of the model (active nodes can not be deactivated randomly) does not hold anymore. Actually, the application itself can determine if this hypothesis is realistic or not. In the case of our layered graph representation, we can easily relax this assumption, by replacing the ``1'' with a random probability over the edges between the same nodes. Our model along with the main notations are still well defined under this relaxation.

However, it appears that it destroys the reasoning of the proof of our main result (Theorem~\ref{Theorem1}), as the utility function $\tilde{f}_{\mathcal{G}}$ is no longer adaptive submodular.
Additionally, we show that the adaptive submodularity property is also violated even in the case of the \emph{full-adoption feedback} by using the standard IM utility function $f_{\mathcal{G}}$.
\begin{lemma}  \label{lem:6}
  Forcing active nodes to remain active throughout the process constitutes a \textit{necessary condition} to verify the adaptive submodularity property of:\\
i) $\tilde{f}_{\mathcal{G}}$ in the modified IC model with myopic feedback;\\
ii) $f_{\mathcal{G}}$ in the standard IC model with full-adoption feedback.
\end{lemma}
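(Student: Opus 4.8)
The plan is to prove both parts by exhibiting explicit counterexamples, exactly in the spirit of Lemma~\ref{lem:5}: once the progressivity assumption is dropped --- i.e. once the weight~$1$ on each same-node inter-layer edge $v_t \to v_{t+1}$ of $\mathcal{G}^L$ is replaced by a probability $q<1$, so that a node stays active from $t$ to $t+1$ only with probability $q$ --- we display, on a tiny graph, two nested partial realizations $\psi \subseteq \psi'$ and a vertex $u \notin \mathrm{dom}(\psi')$ for which $\Delta(u|\psi) < \Delta(u|\psi')$, contradicting adaptive submodularity. Since any relaxation that permits random deactivation contains such an instance, progressivity is then seen to be a necessary condition.

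For part~(i) (modified IC, myopic feedback, utility $\tilde f$) I would reuse the two-node graph $u \to v$ of Fig.~\ref{fig:lem5graph} with $p_{uv}=p$, horizon $T=3$, $u$ active at $t=1$, and now a deactivation parameter $q$ on each self-edge. Take $\psi=\emptyset$ (so $\mathcal{T}(\psi)=1$) and $\psi'$ the realization recording that $u$ is active at $t=1$ and failed to influence $v$ at $t=2$ (so $\mathcal{T}(\psi')=2$); then $\psi\subseteq\psi'$. Computing $\Delta(v|\psi)$ and $\Delta(v|\psi')$ is a finite case analysis over the statuses of the few still-unobserved edges in the truncated layered graphs: the reward contributed by $v_1,v_2,v_3$ now depends on whether the seeded copy of $v$ survives along the self-edges, which introduces the factors of $q$. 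The resulting quantities are low-degree polynomials in $p$ and $q$, and it then remains to check that the sign of $\Delta(v|\psi)-\Delta(v|\psi')$ is not constant on $[0,1]^2$ --- in particular that it is negative for a suitable choice such as $q$ small and $p$ bounded away from the extremes --- which is the sought violation. This is consistent with the remark following Lemma~\ref{lem:5}: at $q=1$ the difference is nonnegative for all $p$, so the phenomenon is genuinely caused by deactivation and not by the myopic feedback alone.

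For part~(ii) (standard IC, full-adoption feedback, utility $f$) I would again build a small graph --- a short directed path or a ``Y'', three or four nodes suffices --- with one vertex carrying deactivation probability $q<1$, and contrast a partial realization $\psi$ taken \emph{before} the outcome of a potential deactivation has been revealed with $\psi'\supseteq\psi$ taken \emph{after} it has been revealed (under full-adoption feedback the whole cascade is observed, so the relevant event is whether an already-activated vertex is still active one step later). Seeding a downstream vertex $u$ and computing $\Delta(u|\psi)$ and $\Delta(u|\psi')$ under $f$ amounts to taking expectations of the extra coverage $u$ provides over the already-active set; with deactivation that already-active set can shrink, so learning in $\psi'$ that a covering vertex has switched off can make the later seeding strictly more valuable, i.e. $\Delta(u|\psi) < \Delta(u|\psi')$. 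These are again finite polynomial computations in the edge probabilities and $q$, and one exhibits a single parameter setting witnessing the strict inequality.

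The main obstacle is part~(ii): under full-adoption feedback the observation $\psi'$ is informationally rich, so the graph must be designed so that this extra information genuinely benefits the later seeding \emph{relatively} more than the earlier one, and --- because the whole cascade rather than a single layer must be tracked --- the bookkeeping of which nodes are active at which times in the presence of deactivation is more delicate than in the single-edge, myopic setting of Lemma~\ref{lem:5}. A secondary concern is robustness: each counterexample should hold on an open range of $(p,q)$, not merely at a knife-edge value, so that the violation is not an artifact of a measure-zero parameter choice.
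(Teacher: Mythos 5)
Your overall strategy --- small explicit counterexamples exhibiting $\psi \subseteq \psi'$ with $\Delta(\cdot|\psi) < \Delta(\cdot|\psi')$ once deactivation is allowed --- is exactly the paper's. For part (i) your plan essentially coincides with the paper's own counterexample: the paper uses the same two-node graph unrolled over three layers, with all six layered edges Bernoulli$(1/2)$ (i.e.\ $p=q=1/2$), and takes $\psi'$ to record that $(u_1,u_2)$ and $(u_1,v_2)$ are dead. The computation you defer does go through: with $\psi'=\{(u_1,v_2)\text{ dead}\}$ one finds $\Delta(v|\psi)=1+q(1-p)+q^2(1-p)(1-pq)$ and $\Delta(v|\psi')=1+q(1-pq)$, so $\Delta(v|\psi)-\Delta(v|\psi')=-pq(1-q)+q^2(1-p)(1-pq)<0$ for small $q$, as you predicted. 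One caution: the conclusion depends on what $\psi'$ says about $u$'s own survival. If $\psi'$ records that $u$ is still active at $t=2$, this two-node graph yields $\Delta(v|\psi)-\Delta(v|\psi')=q^2(1-p)(1-pq)\ge 0$ and there is no violation; the strict violation needs $u$'s survival either unobserved or observed dead (the paper takes the latter, putting $(u_1,u_2)$ dead into $\psi'$). So you must pin down the feedback convention before claiming the inequality.

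The genuine gap is part (ii), where you give no concrete graph and the mechanism you propose is both unverified and harder than necessary. You aim for an information-driven violation (learning that a covering vertex switched off makes the later seed relatively more valuable) and you correctly flag the resulting difficulties --- re-activation semantics, tracking the whole cascade, competing timing effects --- but you never resolve them, so (ii) is not proved in your plan. The paper's counterexample needs none of this: take the path $v \to w \to z$ with probability-$1$ edges plus an isolated node $u$, deactivation probability $1/2$ per step, horizon $T=2$; let $\psi=\emptyset$ and let $\psi'$ record only that the isolated seed $u$, chosen at $t=1$, is active. Then $\Delta(v|\psi')=3$ exactly, because seeding $v$ at the final step activates $v,w,z$ and the process ends before any of them can deactivate, while $\Delta(v|\psi)<3$ because seeding $v$ at $t=1$ leaves a step during which each of $v,w,z$ may switch off. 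The violation is a pure horizon effect: $\psi'$ reveals nothing about any deactivation, and your worry that full-adoption feedback is informationally rich is sidestepped simply by seeding an isolated node; moreover the strict inequality holds for every deactivation probability in $(0,1)$, which also meets your robustness requirement. If you wish to keep your information-based construction you would still need to exhibit a specific graph, fix whether deactivated nodes can be re-activated, and verify that the informational gain dominates the timing effects; as written, that step is missing.
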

\begin{proof}
  i) In the case of the modified myopic feedback model, we consider the \textit{layered graph} of Fig.~\ref{fig:GraphExamples}(a) that consists of six random edges.
  There are $2^6 = 64$ ground truth graphs, each of them being obtained with probability $1/64$ since edges are independent Bernoulli r.v., $\mathcal{B}(1/2)$.
  We want to add $v$ to the set of seed nodes. 
  Now, consider $\psi$ where we only know that $u$ is activated at $t = 1$ ($\mathcal{T}(\psi) = 1$), and $\psi'$ where we also observed that 
  $(u_1, u_2)$ and $(u_1,v_2)$ are dead edges ($\mathcal{T}(\psi') = 2$). Clearly, $\psi \subseteq \psi'$.
  A simple decomposition of all possible ground truth graphs leads to $\Delta(v|\psi) = 1 + [2\times\frac{6}{64} + 1\times \frac{10}{64} + 0 \times \frac{48}{64}] = \frac{86}{64}$:
  a reward of $1$ for activating $v_1$ and possibly a marginal gain of adding $v_2$ and $v_3$ ($0$ in $48$ ground truth realizations, $1$ in $10$ of them,  $2$ in $6$ of them).
  We also get that $\Delta(v|\psi') = 1 + \frac{1}{2} = 1.5$: $v_2$ is active (seed) while $v_3$ is active with probability $1/2$.
  Therefore, $\Delta(v|\psi') > \Delta(v|\psi)$.

  
  ii) Let us consider the graph of Fig.~\ref{fig:GraphExamples}(b), where active nodes have a probability of $1/2$ to be deactivated at each time.
  Recall that our utility function is now the number of activated nodes at the end of the process (standard IC), and let $T = 2$.
  Suppose also that we want to choose $v$ as seed node under the next two scenarios. At the first one we are at time step $t=1$, so $\psi = \emptyset$. The second scenario assumes that we are at $t=2$ having chosen node $u$ at $t=1$, so $\psi'$ only contains the information that $u$ is activated.
  Thus, we get that $\Delta(v|\psi) < 3$, since $v$, $w$ and $z$ are active at $t=1$, but they have a non-null probability to be deactivated at $t=2$. On the other hand, $\Delta(v|\psi') = 3$ as the process ends right after nodes $v$, $w$ and $z$ are activated via choosing $v$ as seed node. Since $\psi \subseteq \psi'$ and $\Delta(v|\psi') = 3 > \Delta(v|\psi)$, adaptive submodularity is once again violated.
\end{proof}
\begin{figure}[t] 
  \begin{center}
    \subfigure[]{
      \begin{tikzpicture}[->, >=stealth', auto, semithick, node distance=2cm]
\tikzstyle{every state}=[fill=white,draw=black,thick,text=black,scale=.75]
\node[state]    (0)                     {$u_1$};
\node[state]    (1)[below of=0, yshift=0.4cm]   {$v_1$};
\node[state]    (3)[right of=0]                     {$u_2$};
\node[state]    (4)[below of=3, yshift=0.4cm]   {$v_2$};
\node[state]    (6)[right of=3]                     {$u_3$};
\node[state]    (7)[below of=6, yshift=0.4cm]  {$v_3$};

\path
(0) edge[left,right]   node{\tiny{$1/2$}}       (4)
(3) edge[left,right]    node{\tiny{$1/2$}}      (7)
(0) edge[left,above, dashed]    node{\tiny{$1/2$}}      (3)
(1) edge[left,above, dashed]    node{\tiny{$1/2$}}      (4)
(3) edge[left,above, dashed]    node{\tiny{$1/2$}}      (6)
(4) edge[left,above, dashed]    node{\tiny{$1/2$}}      (7);
\end{tikzpicture}
    } \hspace{1em}
    \subfigure[]{
      \begin{tikzpicture}[->, >=stealth', auto, semithick, node distance=2cm]
\tikzstyle{every state}=[fill=white,draw=black,thick,text=black,scale=.75]

\node[state]    (10)[below of=1]                {$v$};
\node[state]    (11)[below of=10, yshift=0.4cm] {$u$};
\node[state]    (13)[right of=10]		{$w$};
\node[state]    (16)[right of=13]               {$z$};

\path
(10) edge[left,above]    node{$1$}      (13)
(13) edge[left,above]    node{$1$}      (16);

\end{tikzpicture}
    }
    \caption{Counterexamples to verify that the spread is not adaptive submodular when the active nodes are allowed to be deactivated.}
    \label{fig:GraphExamples}
  \end{center}
\end{figure}
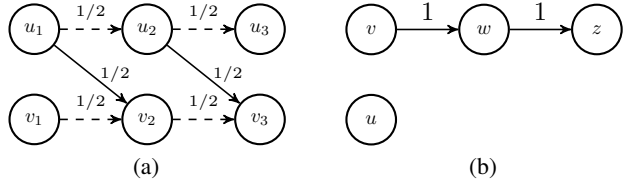

Therefore, the theoretical results presented in our paper and those of \citep{Golovin:2011} are not directly applicable in the case where the active nodes can be deactivated.
However, the hypothesis of active nodes deactivation may be consistent with many applications, including some versions of 
the product placement problem (e.g. customers could reject the product).
In this direction, we are still able to prove a \textit{weaker inequality} at each time step.
We consider the previous framework again, but now active nodes \textit{are allowed to} be deactivated randomly.
At each step $t$, we choose $k_t$ seed nodes from layer $\CL_t$ of $\mathcal{G}^L$, in order to maximize the expected spread in the future having observed which nodes are currently active, 
i.e. active nodes on $\CL_t$. Then, we get the next result.
\begin{lemma} \label{lem:7}
Let $t \in \{1,\dots,T\}$, let $\CS_t \subseteq \CL_t$ the (observed) set of active nodes at time $t$, and consider the following problem: $\CA^* \in \argmax_{\CA \subseteq \CL_t \setminus \CS_t, |A| \leq k} \mathbb{E}_{\Phi}[f_{\CL_t \cup \dots \cup \CL_T}(\CA \cup \CS_t,\Phi)]$. Then, a greedily constructed set $\CA^\text{g} \subseteq \CL_t \setminus \CS_t$ is guaranteed to achieve an $(1-1/e)$-approximation of the optimal set:
$\mathbb{E}_{\Phi}[f_{\CL_t \cup \dots \cup \CL_T}(\CA^\text{g} \cup \CS_t,\Phi)] \geq (1 - 1/e) \mathbb{E}_{\Phi}[f_{\CL_t \cup \dots \cup \CL_T}(\CA^* \cup \CS_t,\Phi)].$
\end{lemma}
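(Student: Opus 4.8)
The plan is to reduce Lemma~\ref{lem:7} to the classical non-adaptive guarantee. For fixed $t$ and fixed observed active set $\CS_t \subseteq \CL_t$, I would show that the set function
\[
g(\CA) \defn \mathbb{E}_{\Phi}\big[\, f_{\CL_t \cup \dots \cup \CL_T}(\CA \cup \CS_t,\Phi) \,\big], \qquad \CA \subseteq \CL_t \setminus \CS_t,
\]
is monotone and submodular on the ground set $\CL_t\setminus\CS_t$, and then invoke the greedy theorem of \citet{Nemhauser78a} for maximizing a monotone submodular function under the cardinality constraint $|\CA|\le k$ --- exactly the route \citet{kdd/KempeKT03} took in the non-adaptive IC setting. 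The set $\CA^{\text{g}}$ in the statement is precisely the output of that greedy procedure (start from $\emptyset$; for $k$ rounds add the node of $\CL_t\setminus\CS_t$ of largest marginal gain of $g$), so the conclusion $g(\CA^{\text{g}})\ge(1-1/e)\,g(\CA^{*})$ follows once monotonicity and submodularity are established.

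To establish those two properties, first I would fix a realization $\phi$ and, as in the proof of Lemma~\ref{lem:3}, write $\CR(x,\phi)$ for the set of nodes of $\CL_t\cup\dots\cup\CL_T$ reachable from a node $x$ along a directed live-edge path in the layered graph $\mathcal{G}^L$. Since $\mathcal{G}^L$ is feedforward and encodes one diffusion step per layer, the extension of Lemma~\ref{lem:1} to the (possibly non-progressive) layered graph gives
\[
f_{\CL_t \cup \dots \cup \CL_T}(\CA \cup \CS_t,\phi) = \Big|\, \bigcup_{u\in\CS_t}\CR(u,\phi) \;\cup\; \bigcup_{v \in \CA}\CR(v,\phi) \,\Big| ,
\]
so for fixed $\phi$ the objective is a coverage function with a fixed ``already-covered'' part $\bigcup_{u\in\CS_t}\CR(u,\phi)$; such a function is monotone, and submodular by the same ``the number of newly covered elements can only shrink as the covered set grows'' argument already used inside the proof of Lemma~\ref{lem:3}. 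Finally $g$ is a nonnegative convex combination (an expectation over $\phi$) of such functions, and both properties are preserved under nonnegative combinations, so $g$ is monotone and submodular. Here I would also invoke the Markovian-on-layers property of Lemma~\ref{lem:2}: only the statuses of edges between layers $\CL_t,\dots,\CL_T$ matter for $g$, and these remain mutually independent Bernoulli variables even under the non-progressive relaxation, so no conditioning on the earlier history is needed.

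The main obstacle I anticipate is the coverage identity above in the non-progressive model, where the edges joining consecutive copies of the same node are themselves random rather than the deterministic ``$1$'' edges: one must check that ``$w$ is active at time $t'$'' still coincides with ``$w_{t'}$ is live-reachable from $\CA\cup\CS_t$ in $\mathcal{G}^L$'', so that $f_{\CL_t\cup\dots\cup\CL_T}(\cdot,\phi)$ is genuinely a coverage function for every fixed $\phi$; this is the layered-graph bookkeeping behind Lemma~\ref{lem:1}, now allowing a node to switch off and possibly be reactivated by a neighbor later. Once that identity is in place, the remainder is the textbook monotone-submodular greedy argument and requires no adaptive-submodularity machinery --- which is exactly why this weaker statement survives the loss of the progressivity assumption as a per-time-step, non-adaptive guarantee.
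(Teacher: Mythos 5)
Your proposal matches the paper's own argument: both show that for each fixed realization $\phi$ the function $f_{\CL_t\cup\dots\cup\CL_T}(\cdot\cup\CS_t,\phi)$ is (monotone) submodular via the live-edge reachability/coverage argument already used in Lemma~\ref{lem:3}, note that the expectation over $\phi$ is a nonnegative combination preserving these properties, and conclude with the classical greedy bound of \citet{Nemhauser78a}. Your extra care about monotonicity and about the coverage identity when the same-node edges become random is a welcome filling-in of details the paper leaves implicit, but it is the same route, not a different one.
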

\begin{proof} 
  We easily derive from the proof of Lemma~\ref{lem:3} that $f_{\mathcal{G}^L}$ is submodular, for any layered graph $\mathcal{G}^L$ (i.e., also for layered graphs $L_t \cup \dots\cup L_T$).
  Indeed, we proved that for any $\phi$ and $A \subseteq B \subseteq L_t$, 
  $f_{\mathcal{G}^L}(A\cup \{v_t\},\phi) - f_{\mathcal{G}^L}(A,\phi) \geq f_{\mathcal{G}^L}( B\cup \{v_t\},\phi) - f_{\mathcal{G}^L}(B,\phi).$
  Moreover, submodularity being preserved under nonnegative linear combinations, then the objective function of Theorem~\ref{Theorem1} is also itself submodular. Indeed, the expectation is a
  weighted sum of submodular functions, weights being probabilities, according to $p(\phi)$.
  Therefore, we conclude by applying the classical result of \citet{Nemhauser78a}.
\end{proof}

This result is weaker than that of Theorem~\ref{Theorem1}, since it is simply a ``step-by-step'' inequality on each seeding, but not anymore on the entire policy. However, it is free from the assumption that active nodes should remain active.

\section{Empirical Analysis}
\label{sec:Experiments}

We conducted experiments on three social networks from Stanford's SNAP database \citep{snapnets}.
The first one is a small directed \textit{ego network from Twitter} ($|\CV| = 228$, $|\CE| = 9,938$). We also study two medium-size undirected real networks, a \textit{social network from Facebook} 
($|\CV| = 4,039$, $|\CE| = 88,234$) and a \textit{collaboration network from Arxiv General Relativity and Quantum Cosmology} section ($|\CV| = 5,242$, $|\CE| = 28,980$).

Throughout our empirical analysis, we considered the \emph{modified} IC diffusion model with myopic feedback.
Our primary objective is the adaptive selection of $k$ seed nodes, one at each time.
The time horizon is defined as $T = k+1$, i.e. the diffusion process stops one step right after selecting the last seed.
Similar to \citep{kdd/KempeKT03,Gotovos:2015}, we set an identical influence probability at each edge, $p=0.1$. 
All expected marginal gains were estimated via Monte Carlo sampling ($1,000$ simulations). 

\noindent\textbf{Adaptive greedy Vs Heuristic adaptive strategies}
As it is not possible to actually compute the optimal set of influential nodes, we compare the performance of the adaptive greedy strategy w.r.t. three alternative heuristics to identify influential seed nodes.
These heuristics adaptively choose: (i) the node with highest \textit{betweenness centrality}; (ii) the node with \textit{highest degree}; and (iii) a \textit{random} node among inactive nodes.
Figure~\ref{fig:results1} illustrates the empirical means of the expected utility $\tilde{f}$ as well as the $\pm 1$ standard deviation intervals over $100$ runs.
The adaptive greedy strategy significantly outperforms the other heuristic strategies in all cases.
Our results illustrate the empirical superiority of the greedy strategy to tackle the adaptive IM problem with myopic feedback, w.r.t. more common metrics from graph theory.
Without surprise, the random baseline is by far the worst strategy, while the performances of adaptive \textit{degree} and adaptive \textit{centrality} strategies seem to vary according to the networks.

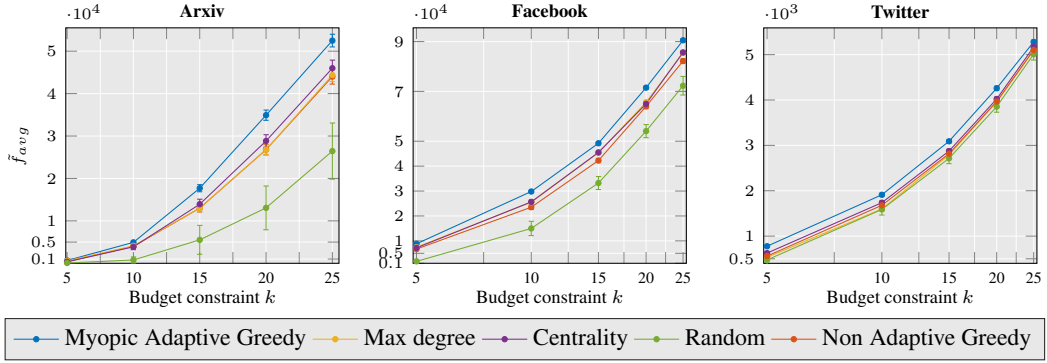
\begin{figure*}[t!]
  \centering
%
%
%
\begin{tikzpicture}

\begin{axis}[%
width=0.26\columnwidth,
scale only axis,
xmin=0.98,
xmax=5.1,
xtick={1,1.5,2,2.5,3,3.5,4,4.5,5},
xticklabels={{5},{},{10},{},{15},{},{20},{},{25}},
every x tick label/.append style={font=\tiny},
xlabel style={font=\color{white!15!black}},
xlabel style={yshift=0.3cm,font=\scriptsize},
xlabel={Budget constraint $k$},
ymin=0,
ymax=55501,
ytick={1000,5000,10000,20000,30000,40000,50000},
ylabel style={font=\color{white!15!black}},
ylabel={$\tilde{f}_{avg}$},
ylabel style={yshift=-0.6cm,font=\tiny},
every y tick label/.append style={font=\tiny},
axis background/.style={fill=white!91!black},
xmajorgrids,
ymajorgrids,
title={\bf Arxiv},
title style={yshift=-0.2cm,font=\scriptsize},
grid style={white, opacity=0.9},
legend style={legend cell align=left, align=left, draw=white!15!black, fill=white!91!black}
]
\addplot [color=mycolor1, mark size=1.pt, mark=*, mark options={solid, mycolor1}]
 plot [error bars/.cd, y dir = both, y explicit]
 table[row sep=crcr, y error plus index=2, y error minus index=3]{%
1	665	58	58\\
2	4935	302	302\\
3	17698	840	840\\
4	34913	1208	1208\\
5	52491	1492	1492\\
};

\addplot [color=mycolor2, mark size=1.pt, mark=*, mark options={solid, mycolor2}]
 plot [error bars/.cd, y dir = both, y explicit]
 table[row sep=crcr, y error plus index=2, y error minus index=3]{%
1	370	70	70\\
2	3999	591	591\\
3	13004	912	912\\
4	26785	1242	1242\\
5	44010	1774	1774\\
};

\addplot [color=mycolor3, mark size=1.pt, mark=*, mark options={solid, mycolor3}]
 plot [error bars/.cd, y dir = both, y explicit]
 table[row sep=crcr, y error plus index=2, y error minus index=3]{%
1	578	54	54\\
2	4113	306	306\\
3	13029	820	820\\
4	26777	1107	1107\\
5	44367	1521	1521\\
};

\addplot [color=mycolor4, mark size=1.pt, mark=*, mark options={solid, mycolor4}]
 plot [error bars/.cd, y dir = both, y explicit]
 table[row sep=crcr, y error plus index=2, y error minus index=3]{%
1	359	97	97\\
2	3878	668	668\\
3	13923	1184	1184\\
4	28832	1473	1473\\
5	45981	1911	1911\\
};

\addplot [color=mycolor5, mark size=1.pt, mark=*, mark options={solid, mycolor5}]
 plot [error bars/.cd, y dir = both, y explicit]
 table[row sep=crcr, y error plus index=2, y error minus index=3]{%
1	70	164	164\\
2	782	721	721\\
3	5525	3409	3409\\
4	13063	5157	5157\\
5	26449	6621	6621\\
};

\end{axis}
\end{tikzpicture}
%
%
%
\begin{tikzpicture}

\begin{semilogxaxis}[%
width=.26\columnwidth,
scale only axis,
xmin=0.98,
xmax=5.1,
xtick={1,1.5,2,2.5,3,3.5,4,4.5,5},
xticklabels={{5},{},{10},{},{15},{},{20},{},{25}},
every x tick label/.append style={font=\tiny},
xlabel style={font=\color{white!15!black}},
xlabel={Budget constraint $k$},
xlabel style={yshift=0.3cm,font=\scriptsize},
ymin=1000,
ymax=95527,
ytick={1000,5000,10000,20000,30000,40000,50000,70000,90000},
ylabel style={font=\color{white!15!black}},
every y tick label/.append style={font=\tiny},
ylabel style={yshift=-0.6cm,font=\tiny},
axis background/.style={fill=white!91!black},
xmajorgrids,
ymajorgrids,
yminorgrids,
title={\bf Facebook},
title style={yshift=-0.2cm,font=\scriptsize},
grid style={white, opacity=0.9}
]
\addplot [color=mycolor1, mark size=1.pt, mark=*, mark options={solid, mycolor1}]
 plot [error bars/.cd, y dir = both, y explicit]
 table[row sep=crcr, y error plus index=2, y error minus index=3]{%
1	8821	302	302\\
2	29807	552	552\\
3	49181	601	601\\
4	71512	632	632\\	
5	90517	624	624\\
};

\addplot [color=mycolor2, mark size=1.pt, mark=*, mark options={solid, mycolor2}]
 plot [error bars/.cd, y dir = both, y explicit]
 table[row sep=crcr, y error plus index=2, y error minus index=3]{%
1	6705	480	480\\
2	23512	891	891\\
3	42252	801	801\\
4	64014	972	972\\
5	82185	891	891\\
};

\addplot [color=mycolor3, mark size=1.pt, mark=*, mark options={solid, mycolor3}]
 plot [error bars/.cd, y dir = both, y explicit]
 table[row sep=crcr, y error plus index=2, y error minus index=3]{%
1	7471	367	367\\
2	25486	560	560\\
3	45397	629	629\\
4	65630	654	654\\
5	85479	628	628\\
};

\addplot [color=mycolor4, mark size=1.pt, mark=*, mark options={solid, mycolor4}]
 plot [error bars/.cd, y dir = both, y explicit]
 table[row sep=crcr, y error plus index=2, y error minus index=3]{%
1	7189	382	382\\
2	25655	721	721\\
3	45499	661	661\\
4	65001	813	813\\
5	85652	629	629\\
};

\addplot [color=mycolor5, mark size=1.pt, mark=*, mark options={solid, mycolor5}]
 plot [error bars/.cd, y dir = both, y explicit]
 table[row sep=crcr, y error plus index=2, y error minus index=3]{%
1	1759	795	795\\
2	14991	2869	2869\\
3	33219	2615	2615\\
4	54055	2629	2629\\
5	72277	3729	3729\\
};

\end{semilogxaxis}
\end{tikzpicture}
%
%
%
\begin{tikzpicture}

\begin{semilogxaxis}[%
width=.26\columnwidth,
scale only axis,
scaled y ticks=base 10:-3,
xmin=0.98,
xmax=5.1,
xtick={1,1.5,2,2.5,3,3.5,4,4.5,5},
xticklabels={{5},{},{10},{},{15},{},{20},{},{25}},
every x tick label/.append style={font=\tiny},
xlabel style={font=\color{white!15!black}},
xlabel style={yshift=0.3cm,font=\scriptsize},
xlabel={Budget constraint $k$},
ymin=400,
ymax=5590,
ytick={500,1e3,2e3,3e3,4e+03,5e+03},
ylabel style={font=\color{white!15!black}},
ylabel style={yshift=-0.6cm,font=\tiny},
every y tick label/.append style={font=\tiny},
axis background/.style={fill=white!91!black},
xmajorgrids,
ymajorgrids,
yminorgrids,
title={\bf Twitter},
title style={yshift=-0.2cm,font=\scriptsize},
grid style={white, opacity=0.9},
legend style={legend cell align=left, align=left, draw=white!15!black, fill=white!91!black, font=\footnotesize},
legend columns=-1,
legend entries={Myopic Adaptive Greedy,Max degree,Centrality, Random, Non Adaptive Greedy},
legend to name=leg:offline
]
\addplot [color=mycolor1, mark size=1.pt, mark=*, mark options={solid, mycolor1}]
 plot [error bars/.cd, y dir = both, y explicit]
 table[row sep=crcr, y error plus index=2, y error minus index=3]{%
1	777	29	29\\
2	1911	36	36\\
3	3090	39	39\\
4	4259	45	45\\
5	5280	41	41\\
};

\addplot [color=mycolor3, mark size=1.pt, mark=*, mark options={solid, mycolor3}]
 plot [error bars/.cd, y dir = both, y explicit]
 table[row sep=crcr, y error plus index=2, y error minus index=3]{%
1	532	36	36\\
2	1599	48	48\\
3	2790	54	54\\
4	3960	53	53\\
5	5092	48	48\\
};

\addplot [color=mycolor4, mark size=1.pt, mark=*, mark options={solid, mycolor4}]
 plot [error bars/.cd, y dir = both, y explicit]
 table[row sep=crcr, y error plus index=2, y error minus index=3]{%
1	626	30	30\\
2	1737	37	37\\
3	2875	41	41\\
4	4022	44	44\\
5	5168	40	40\\
};

\addplot [color=mycolor5, mark size=1.pt, mark=*, mark options={solid, mycolor5}]
 plot [error bars/.cd, y dir = both, y explicit]
 table[row sep=crcr, y error plus index=2, y error minus index=3]{%
1	475	118	118\\
2	1588	123	123\\
3	2714	117	117\\
4	3854	123	123\\
5	5009	129	129\\
};

\addplot [color=mycolor2, mark size=1.pt, mark=*, mark options={solid, mycolor2}]
 plot [error bars/.cd, y dir = both, y explicit]
 table[row sep=crcr, y error plus index=2, y error minus index=3]{%
1	560	49	49\\
2	1679	50	50\\
3	2817	54	54\\
4	3967	53	53\\
5	5102	58	58\\
};

\end{semilogxaxis}
\end{tikzpicture}%
  \ref{leg:offline}
  \caption{Expected cumulative number of active nodes ($\tilde{f}_{avg}$) vs. number of seeds for real world networks.}
  \label{fig:results1}
\end{figure*}

 \noindent\textbf{Adaptive greedy Vs Non-adaptive greedy}
Comparisons have also been made with a non-adaptive standard greedy strategy \citep{kdd/KempeKT03}.
This policy chooses the $k$ seed nodes in advance, at $t=1$, and activate each one of them sequentially (one at each time step).
Based on our experiments (see Fig.~\ref{fig:results1}), the adaptive greedy strategy provides larger influence spreads than the non-adaptive greedy.
It becomes apparent that adaptivity is more profitable, as we gradually gain more knowledge about the truth network.
The performance of the non-adaptive greedy strategy is sometimes worse even when it is compared with that of the adaptive \emph{degree} or \emph{centrality} strategies.
Overall, the results validate our initial claim that the performance of the proposed myopic adaptive greedy policy will be at least as good as that of the non-adaptive greedy policy. 

\noindent\textbf{Impact of network's structure on performance}
Another main insight from our empirical study is that the network's structure strongly impacts the performance of algorithms.
While the superiority of the adaptive greedy strategy is clear on Arxiv and Facebook data, differences between strategies are less obvious on Twitter's network.
It highlights that increasing the edges/nodes ratio of the network decreases the global advantage of the adaptive greedy policy on other strategies. 
Actually, the IM problem itself is less relevant when the network becomes very dense, as all nodes have a quite similar influence power.
As a consequence, it is not surprising to obtain smaller differences between strategies on Twitter.
Since this network is very dense, even the random baseline manages to return good spreads.

\section{Conclusions}

We presented the \emph{myopic adaptive greedy strategy} for the adaptive influence maximization task. 
It is the first time that a policy like this one offers provable approximation guarantees under an IM diffusion model with myopic feedback.
Actually, it is achieved by maximizing an alternative utility function that considers the cumulative number of active nodes over time instead of the total number of the active nodes at the end of the diffusion process.
Our experiments illustrated the empirical superiority of the proposed strategy over more common approaches from graph theory.
Our analysis also pointed out how the graph's density strongly impacts the performance of algorithms.

Several interesting issues remain open for future work. So far, we considered that the influence graph was fully known, which may be a strong assumption in practice. We intend to relax this assumption, studying problems where influence probabilities must be adaptively learned in order to maximize influence.
Last but not least, we plan to examine an even more realistic version of the \emph{modified} IC model.
In that case, the influence probabilities between an active node and its inactive neighbors will be decreased by a predefined factor right after each failure of the first node to influence the other ones.

\newpage

\section*{A. Adaptive setting leads to higher spreads}

\begin{claim}
  The adaptive setting leads to higher spreads compared to the non-adaptive one, since we gradually gain more knowledge about the ground truth influence graph.
\end{claim}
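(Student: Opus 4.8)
The plan is to split the statement into its two natural halves. The easy half is that adaptivity can never hurt: every non-adaptive seeding rule is just the degenerate adaptive policy that discards all observations, so $\max_{\pi} f_{avg}(\pi)$ taken over adaptive policies is tautologically at least the same maximum taken over non-adaptive policies. All the substance is therefore in producing a single instance on which the inequality is \emph{strict}, which is exactly what ``leads to higher spreads'' is meant to record.

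For the strict part I would exhibit a small witness with budget $k=2$. Take the directed graph on nodes $\{s, h_1, h_2, h_3, c_1, c_2, c_3\}$ with a potentially live edge $s \to h_i$ of influence probability $\tfrac12$ for $i=1,2,3$ and a deterministic edge $h_i \to c_i$ for each $i$ (and no other edges); more generally each sink $c_i$ can be replaced by a deterministic downstream component of arbitrary size, which only widens the gap. The design principle is that $k=2$ is one short of the three ``hub'' pairs $(h_i,c_i)$, so no fixed pair of seeds can cover all of them. First I would pin down the best non-adaptive value by a short enumeration over the $\binom{7}{2}$ seed pairs: sinks reach nothing, a pair of hubs reaches four nodes, and the genuinely best pair turns out to be $\{s,h_1\}$ (and its symmetric images), with expected spread $3 + \tfrac12\cdot 2 + \tfrac12\cdot 2 = 5$, i.e.\ three nodes for sure plus each remaining hub-pair with probability $\tfrac12$. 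Thus ``playing it safe'' with $\{h_1,h_2\}$ is already strictly suboptimal in the non-adaptive world, and the non-adaptive optimum equals $5$.

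Then I would lower bound the adaptive optimum by one explicit policy: seed $s$ first; observe which hubs it activated (the observed set $A$, with $|A|$ distributed pointwise as $\mathcal{B}(3,\tfrac12)$); then spend the remaining seed on a hub not yet covered whenever one exists. A four-way case split on $|A|\in\{3,2,1,0\}$ gives reach $7,7,5,3$ respectively (in the single-node-per-component normalization), hence expected spread $\tfrac18\cdot 7 + \tfrac38\cdot 7 + \tfrac38\cdot 5 + \tfrac18\cdot 3 = \tfrac{46}{8} = 5.75$. Since the optimal adaptive policy does at least as well as this particular one, the adaptive optimum is at least $5.75 > 5$, which is the claim; the computation goes through verbatim under either full-adoption or myopic feedback, because seeding $s$ reveals exactly the statuses of the three edges $s\to h_i$ one step later.

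\textbf{Main obstacle.} The only step that is not purely mechanical is verifying that $\{s,h_1\}$ is really the best non-adaptive pair --- in particular that no pair of the form $\{s,c_j\}$ or $\{h_i,c_j\}$ beats it --- which I would dispatch with a symmetry argument together with the observation that sinks never contribute, rather than a blind enumeration. A second point worth stating explicitly in the write-up is that one never needs to identify the optimal adaptive policy at all: it suffices to display \emph{some} adaptive policy whose value already exceeds the non-adaptive optimum, and the ``seed $s$ first'' rule does so with room to spare.
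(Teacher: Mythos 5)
Your proof is correct, and it is in the same spirit as the paper's (both defend the claim by exhibiting a small instance), but it is executed differently and more rigorously. The paper's Appendix~A argument uses a 3-node toy graph ($p_{vu}=0.9$, $p_{vw}=0.1$, $k=2$) and tells a story about one particular ``true world'': the non-adaptive greedy pair $\{v,w\}$ wastes its second seed on a node that the cascade already activated (realized reward $2$), while the myopic adaptive policy observes the outcome of $v$'s edges and re-targets to $u$ (realized reward $3$); it never compares expected values or the non-adaptive \emph{optimum}. You instead split the claim into the tautological half (non-adaptive policies are adaptive policies that ignore feedback, so the adaptive optimum dominates) and a strictness half, and for the latter you verify an in-expectation gap against the \emph{best} non-adaptive pair on your hub-and-sink instance ($5.75$ vs.\ $5$, and your enumeration and the binomial case split check out). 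That buys a genuinely stronger statement -- the gap is not an artifact of one lucky realization or of comparing against the greedy heuristic -- at the cost of a slightly larger witness. One caveat worth flagging in a write-up: your computation (like the paper's own example) fixes each edge's live/dead status once, i.e.\ standard single-attempt IC semantics with the final-spread objective; under the paper's \emph{modified} IC model with repeated influence attempts and the cumulative utility $\tilde f$, the specific numbers would change, though the qualitative conclusion and the tautological half are unaffected. This is a presentational point, not a gap, since the claim is a motivational statement and the paper's example makes exactly the same modeling simplification.
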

To defend our claim, we give a simple example. Consider the network shown in Fig.~\ref{fig:toyexample}(a)  with influence probabilities $p_{vu} = 0.9$ and $p_{vw} = 0.1$.
Let $k = 2$ (seed nodes - our budget). The non-adaptive greedy algorithm will select as seed nodes the $v$ ($t=1$) and $w$ ($t=2$). Nevertheless, based on the true world (see Fig.~\ref{fig:toyexample}(b)), we observe that nodes $v$ and $w$ are active at time $t=2$. Hence, we will infer that the edges $(v,w)$ and $(v,u)$ are live and dead, respectively. Therefore, the non-adaptive strategy will lead to a reward equal to $2$, as only nodes $v$ and $w$ will be activated finally, but not $u$. Roughly speaking, we are going to make an offer at an already influenced user. On the other hand, the adaptive myopic strategy will first choose the node $v$ and then will observe the status of the outgoing edges of node $v$. In other words, he will observe that $v$ managed to influence $w$ but not $u$. Hence, he will choose node $u$ as the second seed node, since it is the only one which is not activated at this point. This returns a reward equal to $3$, which is higher than that returned by the non-adaptive policy, since all nodes are finally activated.

\begin{figure}[h] 
  \begin{center}
    \subfigure[]{
      \begin{tikzpicture}[->, >=stealth', auto, semithick, node distance=2.cm]
\tikzstyle{every state}=[fill=white,draw=black,thick,text=black,scale=1.]
\node[state] (0) {$v$};
\node[state] (1)[right of=0, xshift=.3cm]   {$u$};
\node[state] (2)[below of=1]   {$w$};

\path
(0) edge[left,above]    node{$0.9$}      (1)
(0) edge[left,below]    node{$0.1$}      (2);
\end{tikzpicture}
    } \hspace{4em}
    \subfigure[]{
      \begin{tikzpicture}[->, >=stealth', auto, semithick, node distance=2.cm]
\tikzstyle{every state}=[fill=white,draw=black,thick,text=black,scale=1.]

\node[state, fill=gray!30]    (10)[right of=1] {$v$};
\node[state] (11)[right of=10, xshift=.3cm] {$u$};
\node[state, fill=gray!30]    (12)[below of=11] {$w$};

\path
(10) edge[left,above, dashed]    node{$0$}      (11)
(10) edge[left,below]    node{$1$}      (12);

\end{tikzpicture}
    }
    \caption{(a) Graph network (b) True world at time $t=2$}
    \label{fig:toyexample}
  \end{center}
\end{figure}
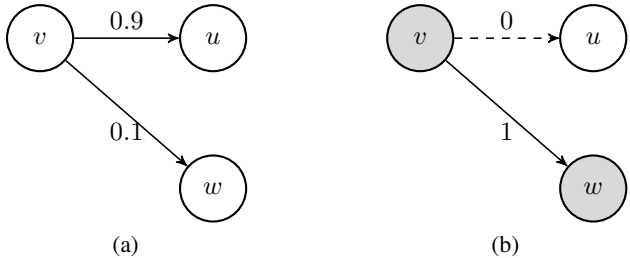

\section*{B. Networks Description}

Experiments have been conducted on three networks obtained from the Stanford's SNAP database \cite{snapnets}. 
The first one is a small graph that corresponds to an \textit{ego network from Twitter}.
Actually, the dataset is a subset - a ``circle'' - from the list of social circles from Twitter, crawled from public sources.
The graph consists of \textit{$228$ nodes} and \textit{$9,938$ edges}.

The second one is a \textit{social network extracted from Facebook}. Data were anonymously collected from survey participants using the Facebook app.
The graph is undirected, and has \textit{$4,039$ nodes} and \textit{$88,234$ edges}.

The third graph is that of the \textit{Arxiv General Relativity and Quantum Cosmology collaboration network}.
In this graph, we have an undirected edge from $i$ to $j$, if author $i$ co-authored a ArXiv paper with author $j$ (between $1993$ and $2003$).
This graph has \textit{$5,242$ nodes} and \textit{$28,980$ edges}.

Table~\ref{table1} summarizes a number of useful statistics about the aforementioned networks.
\textit{Mean degree} is the mean number of edges \textit{exiting} nodes. 
A.P.L. stands for \textit{Average Path Length}, which is the average number of nodes in the shortest path between two nodes of the graph. Moreover,
the \textit{diameter} of a graph is the length of the longest shortest path between two nodes.
\begin{table*}[h]
  \small
  \begin{center}
    \begin{tabular}{c|ccccccc}
      \textbf{Network} & \textbf{Nodes} & \textbf{Edges} & \textbf{Mean degree} & \textbf{Max degree} & \textbf{A.P.L.} & \textbf{Diameter} & \textbf{Type} \\
      \hline 
      Twitter & $228$ & $9,938$ & $43.6$ & $125$ & $2.1$ & $6$ & Directed \\
      ArXiv GR-QC & $5,242$ & $28,980$ & $11.1$ & $162$ & $6.1$ & $17$ & Undirected \\
      Facebook & $4,039$ & $88,234$ & $43.7$ & $1,045$ & $3.7$ & $8$ & Undirected \\
    \end{tabular}
    \caption{Statistics of the real-world networks used through the experimental analysis.}     \label{table1}
  \end{center}
\end{table*}

\section*{C. Adaptive Greedy Myopic Policy and Alternative Heuristics}

The performance of the proposed \emph{myopic adaptive greedy strategy} has been compared with that of the next four \textit{alternative adaptive heuristics}.
\begin{itemize}[leftmargin=*,noitemsep,nolistsep]
\item \textbf{Degree:}  The node with the \textit{highest degree} (i.e., the node with the highest number of outgoing edges) has been chosen as a seed node at each time;
\item \textbf{Centrality:} The node(s) with the \textit{highest centrality measure} among the inactive nodes has been selected as a seed node at each time step. In our analysis, we adopted the \textit{betweenness centrality} measure, which is equal to the number of shortest paths from all nodes to all others that pass through a node;
\item \textbf{Random:} Selects randomly an inactive node as a seed node at each time step;
\item \textbf{Non-adaptive:} The $k$ seed nodes have been selected in advance by using the standard greedy algorithm \cite{kdd/KempeKT03}. Then, we activate each of them sequentially, at each time step, starting from the one with the maximum expected marginal gain.
\end{itemize}

Finally, it should be stressed that in the case of the \emph{modified myopic feedback model}, we chose to implement the improved \textit{accelerated version} of the \textit{adaptive greedy strategy} \cite{Golovin:2011}, for computational reasons.
The algorithm is based on so-called \textit{lazy evaluations}, i.e. on a clever use of the adaptive submodularity inequality to significantly reduce running times in practice by diminishing the number of nodes on which Monte Carlo simulations should be performed. 
The pseudocode and the justification of this acccelerated adaptive greedy algorithm are reported in \cite{Golovin:2011}.

\bibliography{misc}
\bibliographystyle{plainnat}

\end{document}